\documentclass[journal, 10pt]{IEEEtran}
\usepackage[colorlinks,
            linkcolor=blue,
            anchorcolor=red,
            citecolor=red]{hyperref}

\usepackage{booktabs}

\usepackage{diagbox} 
\usepackage{graphicx}
\usepackage{epstopdf}
\usepackage{psfrag}
\usepackage{subfigure}
\usepackage{url}
\usepackage{stfloats}
\usepackage{amsfonts,amssymb,amsmath,bm,paralist,theorem,cite,ifthen,color,nccmath}
\usepackage{caption}
\usepackage{calc}
\usepackage{enumerate}
\usepackage{multirow}
\usepackage{makecell}
\usepackage[linesnumbered,ruled]{algorithm2e}
\usepackage{setspace}
\usepackage{array}
\usepackage{float}
\usepackage{soul}
\usepackage{titlesec}
\usepackage{booktabs}
\usepackage{subcaption}

\graphicspath{{Figures/}}

\newcommand\Nr{\ensuremath{ N_{\rm r} }}
\newcommand\Nrf{\ensuremath{ N_{\rm RF} }}
\newcommand\Nw{\ensuremath{ N_{\rm w} }}
\newcommand\Nu{\ensuremath{ N_{\rm u} }}

\newcommand\rmj{\ensuremath{ \mathrm{j} }}
\newcommand\Kcl{\ensuremath{\mathcal{K}}}
\newcommand\ssb{\ensuremath{ \mathbf{s} }}
\newcommand\Cs{\ensuremath{{\mathbb{C}}}}

\newcommand\rmc{\ensuremath{ \mathrm{c} }}
\newcommand\rmd{\ensuremath{ \mathrm{d} }}
\newcommand\rma{\ensuremath{ \mathrm{a} }}
\newcommand\rme{\ensuremath{ \mathrm{e} }}
\newcommand\rms{\ensuremath{ \mathrm{s} }}
\newcommand\rmf{\ensuremath{ \mathrm{F} }}

\newcommand\rmvec{\ensuremath{ \mathrm{Vec} }}
\newcommand\rmtr{\ensuremath{ \mathrm{Tr} }}
\newcommand\rmt{\ensuremath{ \mathrm{t} }}
\newcommand\ew{{\mathbf f}}
\newcommand\mathbfeW{{\mathbf W_\rme \mathbf W_\rma \mathbf W_\rmd }}
\newcommand\eW{{\mathbf F}}
\newcommand\ep{{\mathbf z}}

\def\BibTeX{{\rm B\kern-.05em{\sc i\kern-.025em b}\kern-.08em
T\kern-.1667em\lower.7ex\hbox{E}\kern-.125emX}}

\newtheorem{lemma}{Lemma}

\newtheorem{proposition}{Proposition}

\SetKwInput{Input}{input}
\SetKwInput{Output}{output}

\usepackage{array}
\newcolumntype{L}[1]{>{\raggedright\let\newline\\\arraybackslash\hspace{0pt}}m{#1}}
\newcolumntype{C}[1]{>{\centering\let\newline\\\arraybackslash\hspace{0pt}}m{#1}}
\newcolumntype{R}[1]{>{\raggedleft\let\newline\\\arraybackslash\hspace{0pt}}m{#1}}

\newtheorem{remark}{Remark}

\IEEEoverridecommandlockouts
\def\specialpapernotice#1{\if@confmode%
	\def\@specialpapernotice{{\sublargesize\textit{#1}\vspace*{1em}}}%
	\else%
	\def\@specialpapernotice{{\\*[1.5ex]\sublargesize\textit{#1}}\vspace*{-2ex}}%
	\fi}

\makeatletter\patchcmd{\@makecaption}{\scshape}{}{}{}

\usepackage[all=normal,paragraphs=tight,floats=normal,mathspacing=normal,wordspacing=tight,charwidths=tight,mathdisplays=normal,leading=normal]{savetrees}

\newcommand{\T}{{\scriptscriptstyle\mathsf{T}}}
\renewcommand{\H}{{\scriptscriptstyle\mathsf{H}}}

\allowdisplaybreaks
\begin{document}

\titlespacing{\section}{-0.64 cm}{4pt}{2pt}
\titlespacing{\subsection}{0 cm}{4pt}{2pt}

\title{Tri-Hybrid Beamforming Design for Large-Scale \\MIMO ISAC Systems}
\author{Tianyu Fang, \IEEEmembership{Student Member, IEEE}, Mengyuan Ma, \IEEEmembership{Student Member, IEEE}, Markku Juntti, \IEEEmembership{Fellow, IEEE},\\Inkyu Lee, \IEEEmembership{Fellow, IEEE}, Joonhyuk Kang, \IEEEmembership{Member, IEEE}, and Nhan Thanh Nguyen, \IEEEmembership{Member, IEEE}
 \vspace{-0.8cm}
\thanks{
A part of this work was accepted to be presented at the IEEE Int. Conf. Acoustics, Speech, and Signal Process. (ICASSP), 2026 \cite{fang2026tri}.
\par T. Fang, M. Ma, M. Juntti, and N. T. Nguyen are with the Centre for Wireless Communications, University of Oulu, 90014 Oulu, Finland (e-mail:
\{tianyu.fang, mengyuan.ma, markku.juntti, nhan.nguyen\}@oulu.fi). 
Inkyu Lee is with the School of Electrical Engineering, Korea University, Seoul 02841, South Korea (e-mail: inkyu@korea.ac.kr). 
Joonhyuk Kang are with the Department of Electrical Engineering, Korea Advanced Institute of Science and Technology, Daejeon 34141, South Korea (e-mail: jhkang@ee.kaist.ac.kr).}
}


\maketitle

\begin{abstract}
Tri-hybrid multiple-input multiple-output (MIMO) architectures have been proposed as a promising solution for enabling energy-efficient communications systems in large-scale antenna arrays by replacing conventional antenna arrays in hybrid beamforming (HBF) systems with low-cost dynamic metasurface antennas (DMAs). In this work, we investigate beamforming design for integrated sensing and communications (ISAC) systems based on tri-HBF architectures, with the objective of jointly enhancing the communications sum rate and sensing sum mutual information. Specifically, we formulate a weighted multi-objective optimization problem that balances communications throughput and sensing mutual information, subject to a total transmit power consumption constraint and the physical limitations inherent to tri-HBF architectures. By exploiting the problem structure, we develop an efficient iterative algorithm with closed-form updates to solve the non-convex optimization problem with low complexity. Numerical results are provided to evaluate the performance of the proposed tri-HBF architecture in various setups. The results demonstrate that the proposed tri-HBF architecture can achieve significant improvement in energy efficiency (EE) compared to the conventional hybrid MIMO and DMA-only configurations, at the cost of minor degradations in sum rate and sensing mutual information.
\end{abstract}

\begin{IEEEkeywords}
Integrated sensing and communications, tri-hybrid beamforming, dynamic metasurface antenna, multiple-input multiple-output systems.
\end{IEEEkeywords}

\section{Introduction}\label{Sec:intro}

\par Sixth-generation (6G) wireless networks are expected to integrate sensing as a native functionality, enabling simultaneous high-rate communications and high-resolution environmental perception. Such convergence is crucial for applications including autonomous driving, extended reality, and digital twins~\cite{dong2025communication}. Integrated sensing and communications (ISAC) has emerged as a promising paradigm, leveraging shared spectrum, hardware, and signal processing for joint data transmission and environmental sensing~\cite{liu2022integrated}.

The realization of ISAC in 6G is expected to rely on large-scale multiple-input multiple-output (MIMO) technologies, where large antenna arrays provide both high spatial multiplexing gains for communications and fine spatial resolution for sensing~\cite{Lu2024}. However, the deployment of extremely large antenna arrays introduces significant challenges in terms of hardware complexity, power consumption, and implementation cost~\cite{jiang2024THz}. Conventional fully digital and hybrid analog-digital beamforming often face tradeoffs among hardware cost, energy efficiency (EE), and beamforming flexibility in large-scale MIMO systems. To overcome these limitations, the tri-hybrid beamforming (tri-HBF) architecture~\cite{castellanos2025embracing} has recently been proposed. It performs signal processing across the digital, analog, and electromagnetic (EM) domains, where the latter is typically realized by metasurface or reconfigurable antenna technologies~\cite{castellnos2023energy,liu2025reconfigurable,liu2025tri,zheng2025tri,chen2025integrated,li2025tri,cheng2025performance,zhao2025tri}. By reducing the number of active RF chains while maintaining high beamforming gain, tri-HBF offers a cost-effective and energy-efficient solution for large-scale MIMO ISAC systems.

\subsection{Related Works}

The design of ISAC systems has attracted significant attention in recent years, with a variety of beamforming architectures proposed to balance performance, cost, and complexity. The earliest and most fundamental benchmark is fully digital beamforming, which provides maximum flexibility in signal shaping and multi-user support, thereby enabling high-precision target sensing and interference mitigation~\cite{liu2018mu,liu2020joint,fang2025optimal,zhang2026rotatable}. Beamforming strategies for joint MIMO radar–communications systems have been widely studied under both separated and shared antenna deployments. Shared waveform designs based on weighted optimization can significantly outperform separated schemes while reducing system complexity~\cite{liu2018mu,liu2020joint}. However, such semidefinite relaxation (SDR)-based formulations often incur prohibitive high computational complexity, especially for large-scale arrays. To enable scalable ISAC transmission, massive MIMO systems have motivated low-complexity designs that combine linear precoders with predesigned beamformers such as maximum-ratio transmission (MRT) or zero-forcing (ZF), together with tractable power allocation schemes~\cite{liao2024power}. In addition, efficient optimization frameworks, including alternating direction method of multipliers (ADMM)-based methods~\cite{tang2025dual,Guo2023FP-ADMM}, fractional programming (FP) approaches~\cite{zhu2023integrated,chen2025fast}, and successive convex approximation (SCA) techniques~\cite{fang2025low,fang2024beamforming}, have been developed to address various ISAC beamforming problems.

While fully-digital designs provide useful performance benchmarks, their scalability to extremely large antenna arrays is limited by hardware cost and energy consumption, which has motivated research on HBF. Compared to fully digital beamformers, HBF architectures reduce hardware cost by employing fewer RF chains, while still enabling high beamforming gains \cite{wang2022partially,qi2022hybrid,chen2025mimo-dfrc,nhan2023deep,nhan2023multiuser}. For instance,\cite{wang2022partially} jointly optimized the analog and digital precoders to minimize the Cramér–Rao lower bound (CRLB) under user signal-to-interference-plus-noise ratio (SINR) constraints. Similarly,\cite{chen2025mimo-dfrc} investigated a fully connected HBF structure, where transmit patterns were designed to maximize the ratio of mainlobe-to-sidelobe power under both SINR and power budget constraints. Beyond optimization-based designs, learning-driven methods have also been proposed. For example,~\cite{nhan2023deep} introduced a deep unfolding approach for HBF in massive MIMO-ISAC systems, achieving improved communications–sensing performance tradeoffs, faster convergence, and significantly reduced computational complexity compared with conventional optimization approaches.

Beyond conventional antenna arrays, recent works have begun exploring dynamic metasurface antenna (DMA) architectures, which extended HBF by incorporating EM domain processing. Unlike arrays with half-wavelength spacing, DMAs can achieve denser element packing and richer spatial manipulation, making them promising for ISAC \cite{gavras2023full,Bayraktar2024near,zhu2025the}. For example,~\cite{Bayraktar2024near} studied the design of precoders and combiners for near-field full-duplex ISAC systems, exploiting the reconfigurability of DMAs at the base station (BS). A related line of work integrated DMAs with RISs, where joint optimization frameworks were developed to maximize symbiotic transmission rates while satisfying sensing and hardware constraints~\cite{zhu2025the}. While these studies demonstrated the potential of DMA-based ISAC, research in this direction remains at an early stage. Existing works primarily focus on specific system settings or simplified architectures, while the joint design of scalable beamforming algorithms, hardware-constrained DMA control, and integrated sensing–communication performance tradeoffs is still largely open.

\subsection{Contributions}
 
In this work, we investigate a tri-HBF architecture for monostatic large-scale MIMO ISAC system. The proposed architecture integrates digital beamformer, analog phase shifters, and EM domain processing enabled by DMAs, thereby allowing denser radiating element packing and less power consumption. We consider a downlink scenario where the BS simultaneously serves multiple communications users while performing multi-target sensing. We aim to jointly optimize the tri-HBF transceiver to balance communications sum rate and sum sensing mutual information (MI), subject to practical hardware constraints on transmit power, DMA responses, and analog phase shifters structures.

The main contributions of this paper are summarized as follows:
\begin{itemize}
    \item We consider an large-scale MIMO ISAC system that employs a tri-HBF architecture, which combines DMA-enabled EM domain processing with conventional analog and digital beamforming. While DMA, hybrid beamforming, and ISAC have each been studied individually, their joint integration in a unified ISAC transmission framework remains largely unexplored. Due to the inherently conflicting performance requirements of communications and sensing, the beamforming design is formulated as a multi-objective optimization problem. More specifically, we maximize the weighted sum of the communications and sensing objectives subject to transmit power constraints as well as practical DMA and analog hardware limitations.
    
    \item The resulting optimization problem is highly non-convex and involves multiple coupled variables. To efficiently solve it, we first eliminate the explicit transmit power constraint by showing that the optimal solution always utilizes the full available transmit power. To further handle the non-convex objective function, we develop a FP-based framework that transforms the original objective into a more tractable surrogate form while guaranteeing equivalence. Within this framework, we propose a block coordinate descent (BCD) algorithm that alternately optimizes the digital beamformer, analog beamformer, and DMA coefficients. Notably, each subproblem admits a closed-form update, which avoids the need for generic numerical solvers, thereby significantly reducing computational complexity. 
    
    \item Extensive simulation results are presented to evaluate the proposed tri-HBF large-scale MIMO ISAC system and to compare it with existing architectures, including fully digital, conventional hybrid, and DMA-based designs. The results demonstrate that the proposed tri-HBF architecture achieves superior EE in both communications- and sensing-dominated operating regions, at the cost of minor degradations in sum rate and sensing mutual information due to the inherent physical constraints of the DMA.
\end{itemize}

Unlike~\cite{chen2025integrated}, which considered tri-hybrid beamforming for ISAC with electromagnetically
reconfigurable antenna (ERA), this work adopts DMA as the EM processing layer, leading to fundamentally different structural constraints on EM-domain modeling. In addition, we impose an antenna output power constraint rather than the antenna input power constraint used in~\cite{chen2025integrated}, which accurately capturing the actual radiated power and the signal transformation induced by the reconfigurable EM structure~\cite{castellanos2025embracing}. Furthermore, while~\cite{chen2025integrated} focuses on a single-target scenario and optimizes SCNR, we consider a multi-target setting and adopt the sum MI as an information-theoretic sensing metric. To the best of our knowledge, this is the first work to develop a systematic and scalable optimization framework for DMA-based tri-HBF in large-scale MIMO ISAC systems under practical hardware constraints and dual-functional design objective.

\textit{Organization:}  The rest of this paper is organized as follows. Section \ref{System model} introduces the system model and problem formulation. Section \ref{sec: algorithm} presents the proposed method along with a thorough analysis of its optimality and convergence. Simulation results and numerical discussions are provided in Section \ref{Sec:simulation} to demonstrate the effectiveness of the proposed algorithms. Finally, Section \ref{Sec:concu} concludes the paper and discusses potential directions for future works.

\textit{Notation:} Vectors and matrices are represented by lowercase and uppercase boldface letters, respectively. The sets of complex- and real-valued numbers are respectively denoted by $\mathbb{C}$ and $\mathbb{R}$, while $\mathbb{R}_{++}$ represents the set of positive real numbers. We use $\mathbb{E}[ \cdot ]$ to represent the expectation of a random variable. The magnitude of a complex number is written as $|\cdot|$. The circularly symmetric complex Gaussian  distribution with zero mean and variance $\sigma^2$ is represented as $\mathcal{CN}(0,\sigma^2)$. The conjugate, transpose and conjugate transpose operators are denoted by $(\cdot)^*$, $(\cdot)^\T$ and $(\cdot)^\H$, respectively, while $\mathrm{diag}{\{\mathbf{y}\}}$ denotes a diagonal matrix whose main diagonal entries are elements of vector $\mathbf{y}$. The element-wise (Hadamard) product is denoted by $\circ$, and the Kronecker product is denoted by $\otimes$. Finally, $[\mathbf{X}]_{i,j}$ denotes the entry in the $i$-th row and $j$-th column.

\section{System Model and Problem Formulation}\label{System model}

\begin{figure}[t!]
\center
\includegraphics[width=0.5\textwidth]{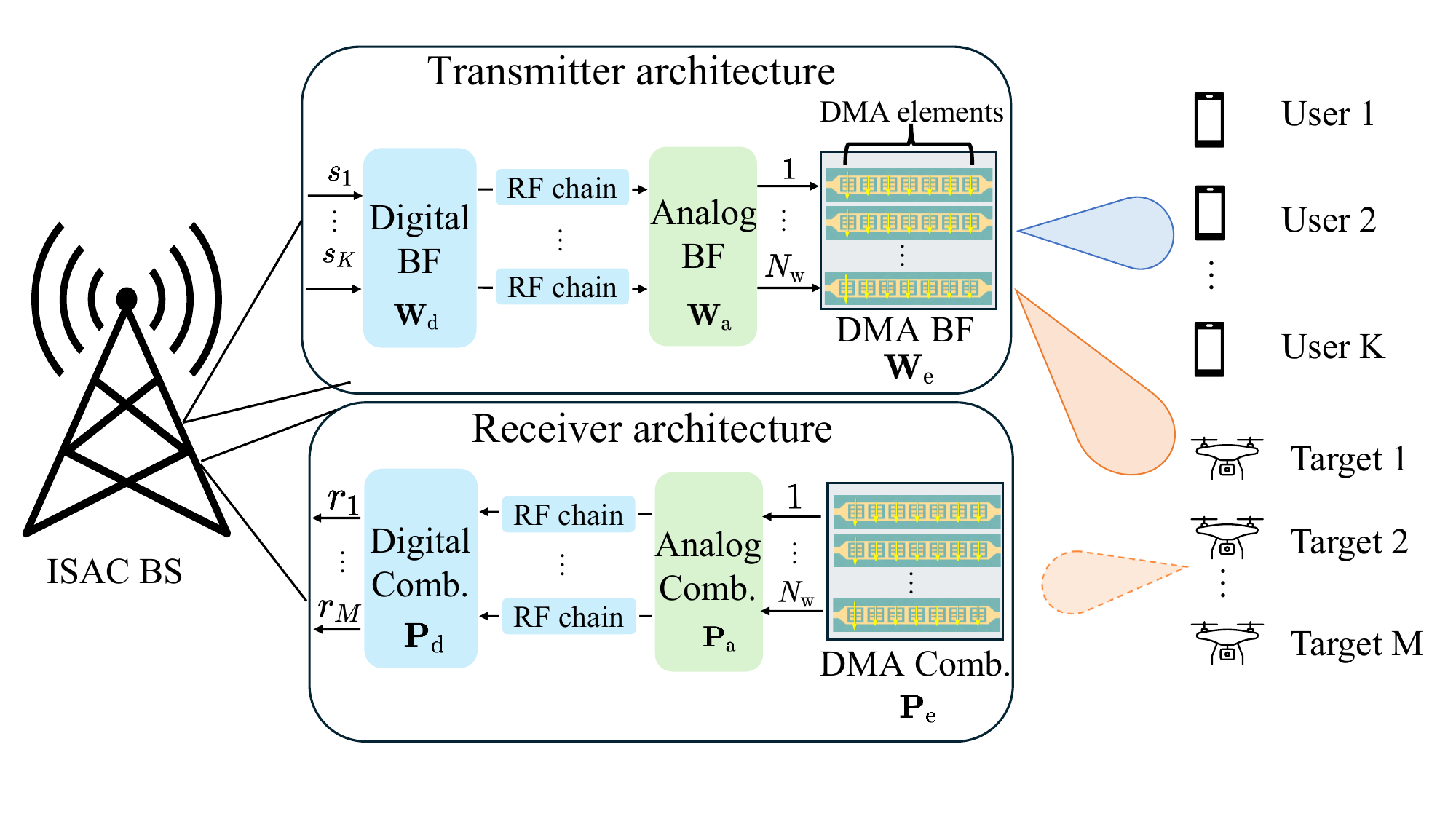}
    \vspace{-5mm}
    \caption{Illustration of the proposed tri-HBF ISAC system. The base station employs digital (Digital BF), analog (Analog BF), and DMA-based electromagnetic beamforming layers for transmission, along with corresponding combining modules for reception.}
    \label{fig:system model}
\end{figure}
We consider a downlink monostatic millimeter-wave (mmWave) ISAC system\footnote{While the formulation focuses on a narrowband setting, it can be extended to wideband orthogonal frequency division multiplexing (OFDM) systems by keeping the DMA and analog beamformers frequency-flat and optimizing the digital beamformers on a per-subcarrier basis. The resulting problem decomposes across subcarriers, and the proposed algorithmic framework can be applied with minor modifications.}, where a base station (BS) equipped with a dynamic metasurface antenna (DMA)-based tri-hybrid beamforming architecture, as illustrated in Fig.~\ref{fig:system model}, simultaneously serves $K$ single-antenna communication users and performs sensing. Let $\Kcl \triangleq \{1,\ldots,K\}$ denote the set of users. Meanwhile, the transmitted signals are leveraged to probe $M$ point-like sensing targets, indexed by $\mathcal{M} \triangleq \{1,\ldots,M\}$, in the presence of $C$ clutter scatterers.


\subsection{Signal Model}

Let $ \ssb=[s_1,\ldots, s_{K}]^\T \in \Cs^{K\times 1} $ denote the vector of symbols intended for the $K$ users such that $\mathbb E[\mathbf s\mathbf s^\H]=\mathbf I_K$. The symbols are first processed by the digital beamformer $\mathbf W_\rmd=[\mathbf w_{\rmd 1},\ldots,\mathbf w_{\rmd K}]\in\mathbb C^{\Nrf\times K}$, where $ \Nrf$ denotes the number of RF chains. The signal is further processed by the analog beamformer $\mathbf W_\rma\in\mathbb C^{\Nw\times \Nrf} $, where $\Nw$ refers to the number of waveguides of the DMA. To fully exploit the potential of the tri-HBF transmitter, we assume a fully connected analog beamformer implemented by phase shifters~\cite{yu2016alternating}. This imposes the constraint:
\begin{align}
    \mathcal C_{\rma}\triangleq\{\mathbf W_\rma\big| \left|[\mathbf W_\rma]_{i,j}\right|=1, \forall i\in\mathcal N_{\rm w}, \forall j\in\mathcal N_{\rm rf}\},
\end{align}
where $\mathcal N_{\rm w}\triangleq\{1,\ldots,\Nw\}$ and $\mathcal N_{\rm rf}\triangleq\{1,\ldots,\Nrf\}$.

The DMA is realized as a planar array comprising $\Nw$ waveguides, each containing $\Nu$ reconfigurable radiating elements. There are $\Nr=\Nw\Nu$ radiating elements in total, indexed by $\mathcal N_{\rm r}=\{1,\ldots,\Nr\}$. Each radiating element captures the signal from its corresponding waveguide, applies a tunable phase shift, and transmits it via the wireless channel. Due to the single input port of each waveguide, the array naturally partitions into subarrays, resulting in a block-diagonal structure in the overall DMA beamformer:
\begin{equation}
    \mathbf W_\rme=\mathrm{blkdiag}\left(\mathbf w_{\rme 1},\ldots,\mathbf w_{\rme \Nw}\right),
\end{equation}
 where the $m$-th element of $\mathbf w_{\rme n}$ is given by $   w_{\rme m, n}\triangleq q_{m,n}\kappa_{m,n}, \forall m\in\mathcal N_{\rm u}, \forall n\in\mathcal N_{\rm w}$ with $\mathcal N_{\rm u}\triangleq\{1,\ldots,\Nu\}$. Here, $q_{m,n}\in\mathbb C$ represents the constant propagation gain from the input port of $n$-th waveguide to the $m$-th radiator. It is modeled as
\begin{align}
    q_{m,n}=e^{-l_{m,n}(\vartheta+\rmj \varpi)},
\end{align}
 where $l_{m,n}$ denotes the physical distance between the input port and the corresponding radiating element, $\vartheta$ is the attenuation coefficient, $\rmj$ denotes imaginary unit, and $\varpi$ is the wavenumber. Furthermore, $\kappa_{m,n}\in\mathbb C$ is the tunable coefficient applied by the $m$-th radiating element, which satisfies the Lorentzian constraint \cite{nir2019dynamic}:
\begin{align}
    \kappa_{m,n}=\frac{{\rmj}+e^{\rmj\psi_{m,n}}}{2}, \psi_{m,n}\in[0,2\pi).
\end{align}
For notational clarity, we introduce a compact representation of the DMA coefficient matrix $\mathbf W_\rme$ as
\begin{align}\label{M compact}
    \mathbf W_\rme=\frac{1}{2}(\rmj\mathbf Q+\mathbf \Psi\circ\mathbf Q),
\end{align}
where $\circ$ represents the element-wise product, $\mathbf Q\in\mathbb C^{\Nr\times\Nw}$ collects all propagation gain coefficients, defined as
\begin{align}
    [\mathbf Q]_{m,n}=
    \begin{cases}
        q_{m,n},&m\in\mathcal N_{\rm r}, n=\lfloor \frac{m-1}{\Nu}\rfloor+1,\\
        0,  & \text{otherwise}.
    \end{cases}
\end{align}
Similarly,  $\mathbf\Psi\in\mathbb C^{\Nr\times\Nw}$ collects the phase shifter coefficients of the radiating elements, given by
\begin{align}\label{Psi compact}
    [\mathbf \Psi]_{m,n}=
    \begin{cases}
        e^{{\rmj}\psi_{m,n}},&m\in\mathcal N_{\rm r}, n=\lfloor \frac{m-1}{\Nu}\rfloor+1,\\
        0,  & \text{otherwise}.
    \end{cases}
\end{align}
Accordingly, the feasible set of the DMA beamformer coefficients is defined as
\begin{equation}
\mathcal C_{\rme} \triangleq \{\mathbf W_\rme \mid \mathbf W_\rme 
\text{ satisfies the DMA constraints}\}.
\end{equation}

Finally, the overall transmitted signal is given by
\begin{equation}
    \mathbf x=\sum_{k=1}^K \ew_k s_k=\eW\mathbf s,
\end{equation}
where $\ew_k\triangleq \mathbf W_\rme\mathbf W_\rma\mathbf w_{\rmd k}$ denotes the $k$-th effective beamforming vector and $  \eW\triangleq[\ew_1,\ldots,\ew_K]=\mathbf W_\rme\mathbf W_\rma\mathbf W_\rmd$. The total transmit power is given by $\mathbb E\left[\mathbf x^\H\mathbf x \right]=\|\eW\|^2_\rmf $, where $\|\cdot\|_\rmf$ denotes the Frobenius norm.

\subsubsection{Communications Model}\label{Sec:communication model}
For communications, the Saleh–Valenzuela scattering model is adopted to characterize mmWave channels~\cite{ayach2014spatially}. Specifically, the channel vector for user $k$ is expressed as
\begin{align}
\mathbf h_{k} = \sqrt{\frac{\Nr}{L}} \sum_{l=1}^{L} \alpha_{l,k}\mathbf a_{\rmt}(\theta_{l,k}, \phi_{l,k}),
\end{align}
where $L$ denotes the number of propagation paths, $\alpha_{l,k}$ is the complex gain of the $l$-th path for user $k$, and $\mathbf a_\rmt(\theta_{l,k}, \phi_{l,k})$ represents the transmit array steering vector corresponding to the azimuth angle $\theta_{l,k} \in [-\pi, \pi]$ and elevation angle $\phi_{l,k} \in [-\pi/2, \pi/2]$.

The received signal at communications user $k $ is given by
\begin{equation}
y_{\mathrm{c}k} = \mathbf h_k^\H \ew_k s_k + \sum_{i=1, i \neq k}^{K} \mathbf h_k^\H \ew_i s_i + n_k,
\end{equation}
where $n_k \sim \mathcal{CN}(0, \sigma_k^2)$ denotes the additive white Gaussian noise (AWGN) with variance $\sigma_k^2$. In this work, to focus on demonstrating the performance of tri-HBF, we assume that the BS has perfect communications channel state information (CSI). Accordingly, the achievable SINR for user $k \in \mathcal{K}$ is given by
\begin{equation} \label{rateo}
\gamma_{\rmc k} =\frac{|\mathbf h_k^\H \ew_k|^2}{\sum_{i=1, i \neq k}^{K} |\mathbf h_k^\H \ew_i|^2 + \sigma_k^2}.
\end{equation}
In the subsequent design and optimization, we employ the sum rate $\sum_{k=1}^K\log(1+\gamma_{\rmc k})$ as the communications performance metric.

\subsubsection{Radar Model}\label{sec:sensing model}
For radar sensing, the received echo signal at the BS can be expressed as
\begin{equation} \label{sensing signal matrix}
\mathbf y_{\mathrm{s}} =\sum_{m=1}^M \mathbf G_m \mathbf x+\sum_{j=M+1}^{M+C}\mathbf G_j \mathbf x+\mathbf n_\rms,
\end{equation}
where $\mathbf G_i \triangleq \alpha_{\rms i}\mathbf a_{\rm r}(\theta_i, \phi_i)\mathbf a_\rmt^\H(\theta_i, \phi_i), i=\{1,\ldots, M+C\}$ represents the line-of-sight (LoS) two-way channel matrix corresponding to targets and clutters, and $\mathbf n_\rms\sim \mathcal{CN}(0, \sigma_\rms^2\mathbf I)$. Here, $\alpha_{\rms i}$ denotes the reflection coefficient, which accounts for the radar cross-section and the round-trip path loss, while $\mathbf a_{\rm r}(\cdot)$ and $\mathbf a_\rmt(\cdot)$ represent the normalized receive and transmit array steering vectors, respectively.

Assuming that the radar receiver adopts the same tri-HBF architecture, the received echo signal is first captured by the DMA, characterized by $\mathbf P_\rme\in\mathcal C_{\rme}$ and then processed by the analog combiner $\mathbf P_\rma\in\mathcal C_{\rma}$. Let $\mathbf p_{\rmd m}\in\mathbb C^{\Nrf \times 1}$ denote the digital combining vector associated with target $m$, the corresponding effective combiner is defined as
\begin{equation}
    \ep_m\triangleq \mathbf P_\rme\mathbf P_\rma\mathbf p_{\rmd m}.
\end{equation}
The resulting combined signal is given by
\begin{align}
r_m &= \ep_m^\H\mathbf y_\rms\\
&= \ep_m^\H\mathbf G_m \eW \mathbf s+\ep_m^\H\left(\sum_{j\neq m}^{M+C}\mathbf G_j \eW \mathbf s+\mathbf n_\rms\right).
\end{align}
Accordingly, the signal-to-clutter-plus-noise ratio (SCNR) for target $m$ is defined as
\begin{align}\label{SCNR}
    \gamma_{\rms m}=\frac{\|\ep_m^\H\mathbf G_m\eW \|_\rmf^2}{\sum_{j=1,j\neq m}^{M+C}\|\ep_m^\H\mathbf G_j\eW\|_\rmf^2 +\sigma_\rms^2\|\ep_m^\H\|^2_\rmf }.
\end{align}
The sensing performance is then quantified by the sum mutual information~\cite{peng2024mutual}, defined as $\sum_{m=1}^M\log(1+\gamma_{\rms m})$.

\subsection{Problem Formulation}
We aim to jointly design the tri-HBF transceiver beamformers and combiners, consisting of the transmit beamformers $\{\mathbf W_{\rme},\mathbf W_{\rma},\mathbf W_{\rmd}\}$ and the receive combiners $\{\mathbf P_{\rme},\mathbf P_{\rma},\mathbf P_{\rmd}\}$, to simultaneously achieve high communications sum rate and sensing total mutual information. To this end, we adopt a widely used scalarization approach, wherein a weighted sum of the communications and sensing objectives is optimized under varying the weight factors. The resulting optimization problem is formulated as
\begin{subequations}\label{P1}
    \begin{align}
    \max_{ \bm\Omega}\,\, &\delta_\rmc \sum_{k=1}^K \log(1+\gamma_{\rmc k})+\delta_\rms \sum_{m=1}^M \log(1+\gamma_{\rms m})\\
    \label{P1C1}\text{s.t.}\,\, & \mathbf W_{\rme},\mathbf P_{\rme}\in\mathcal C_{\rme},\\
   \label{P1C3} &\mathbf W_{\rma},\mathbf P_{\rma}\in\mathcal C_{\rma},\\    \label{P1C4}&\|\eW\|_\rmf^2\leq P_\rmt,
\end{align}
\end{subequations}
where $ \delta_\rmc \geq 0$ and $\delta_\rms\geq 0$ are weighting factors that balance the communications and sensing performance, $P_\rmt$ denotes the total transmit power budget, and $\bm\Omega\triangleq\{ \mathbf W_\rmd,\mathbf W_{\rma},\mathbf W_{\rme},\mathbf P_\rmd,\mathbf P_{\rma},\mathbf P_{\rme}\}$ collects all optimization variables. In this problem, constraint \eqref{P1C1} captures the physical constraints associated with the DMA radiating elements, \eqref{P1C3} represents the phase shifter hardware constraints, and \eqref{P1C4} enforces the transmit power limitation. Problem \eqref{P1} is inherently NP-hard due to the non-convex fractional forms of the SINR and SCNR in \eqref{rateo} and \eqref{SCNR}, the unit-modulus constraints, and the intricate coupling among the optimization variables. To tackle these challenges, we employ the FP framework to develop an efficient iterative algorithm that guarantees convergence to a stationary point of \eqref{P1}, as elaborated in the next section.

\section{Proposed Tri-HBF Design Framework}
\label{sec: algorithm}

To efficiently solve~\eqref{P1}, we first eliminate the transmit power constraint by identifying and leveraging a key property, which reveals that the optimal solution must fully utilize the available transmit power budget. Then, we introduce a FP framework~\cite{Shen2018fractional} that yields a tractable surrogate objective. Finally, we adopt the BCD method~\cite{xu2013block} to alternatively optimize the tri-HBF's digital, analog, and DMA coefficients.

\subsection{Problem Reformulation}
\label{Sec:Reformulation}

We found that the coupling beamfomring matrices in the power constraint \eqref{P1C3}  complicates the development of efficient optimization algorithms. To address this issue, we first introduce the following proposition.

\begin{proposition}\label{pro:full power}
    Any locally optimal solution of problem \eqref{P1} must result in full transmit power consumption, i.e., $\|\eW\|_\rmf^2 =  P_{\mathrm{t}}$. Thus, problem \eqref{P1} is equivalent to
   \begin{subequations}\label{P1full}
    \begin{align}
    \max_{ \bm\Omega}\,\, &\delta_\rmc\sum_{k=1}^K \log(1+\gamma_{\rmc k})+\delta_\rms \sum_{m=1}^M \log(1+\gamma_{\rms m})\\
    \label{P1fullC1}\text{s.t.}\,\, &\|\eW\|_\rmf^2= P_\rmt,\\
    & \eqref{P1C1},\eqref{P1C3}.
    \end{align}
\end{subequations}
\end{proposition}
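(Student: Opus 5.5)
The plan is a scaling argument on the digital beamformer, which is the only block of $\bm\Omega$ that carries no hardware constraint. Suppose $\bm\Omega^\star$ is a locally optimal solution of \eqref{P1} with $\|\eW^\star\|_\rmf^2 < P_\rmt$. For $t\geq 1$, consider the perturbed point in which $\mathbf W_\rmd^\star$ is replaced by $t\mathbf W_\rmd^\star$ and every other variable is left unchanged. Constraints \eqref{P1C1} and \eqref{P1C3} involve only $\mathbf W_\rme,\mathbf P_\rme,\mathbf W_\rma,\mathbf P_\rma$, so they still hold. The effective beamformer becomes $t\eW^\star$, and power feasibility \eqref{P1C4} holds for all $t\in[1,\sqrt{P_\rmt/\|\eW^\star\|_\rmf^2}]$. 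This interval has a nonempty interior because of the strict inequality, so points with $t$ arbitrarily close to $1$ are feasible and lie in any neighborhood of $\bm\Omega^\star$.

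Next I would show how each metric changes with $t$. In \eqref{rateo}, both numerator and interference scale by $t^2$, so
\begin{equation*}
\gamma_{\rmc k}(t)=\frac{t^2 a_k}{t^2 b_k+\sigma_k^2},
\end{equation*}
with $a_k=|\mathbf h_k^\H\ew_k^\star|^2$ and $b_k=\sum_{i\neq k}|\mathbf h_k^\H\ew_i^\star|^2$. This function is nondecreasing in $t$, and strictly increasing whenever $a_k>0$ and $\sigma_k^2>0$. The same computation applies to \eqref{SCNR}. The target and clutter terms both scale by $t^2$, while the noise term $\sigma_\rms^2\|\ep_m\|^2$ does not depend on $t$, so $\gamma_{\rms m}(t)$ is nondecreasing and strictly increasing whenever its signal term is nonzero. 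Since $\delta_\rmc,\delta_\rms\geq 0$, the weighted objective is nondecreasing in $t$.

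The main obstacle is strictness. A local maximum needs the objective to strictly increase under the perturbation, which requires at least one term with positive weight and nonzero signal power. I would handle this as follows. If all weighted signal terms vanish, the objective is zero, and this point cannot be locally optimal as a nontrivial solution unless the whole problem is degenerate. I would state nondegeneracy as a mild assumption: the channels are generic, $\sigma_k^2,\sigma_\rms^2>0$, and not both $\delta$'s are zero. Under this assumption some term strictly increases, so every neighborhood contains a feasible point with a strictly larger objective. This contradicts local optimality, and hence $\|\eW^\star\|_\rmf^2=P_\rmt$.

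Finally, I would deduce the equivalence with \eqref{P1full}. Every local, and hence every global, optimum of \eqref{P1} lies on $\|\eW\|_\rmf^2=P_\rmt$. The feasible set of \eqref{P1full} is a subset of that of \eqref{P1}. Moreover, any feasible point of \eqref{P1} with nonzero $\eW$ can be scaled up to the boundary without decreasing the objective. Therefore the two problems have the same optimal value and the same optimal solutions.
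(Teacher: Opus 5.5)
Your proposal is correct and follows the same route as the paper's proof: scale the unconstrained digital precoder by a factor $t>1$, note that the target and interference terms grow by $t^2$ while the noise terms do not, so every SINR and SCNR is nondecreasing, and conclude that an interior point contradicts local optimality. Your version is more careful than the paper's in three places: you take $t$ arbitrarily close to $1$ so the scaled point lies in every neighborhood, you state explicitly the nonzero-signal and positive-noise condition that the paper's strict inequality silently assumes, and you actually argue the equivalence with \eqref{P1full}, which the paper only asserts.
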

 \begin{IEEEproof}
   See Appendix \ref{property:full power}.
\end{IEEEproof}
\smallskip
This property has been established in our prior work~\cite{fang2025optimal} for the scenario with sensing metric measured by CRLB and fully digital beamforming. We herein extend it to tri-HBF systems. Proposition~\ref{pro:full power} indicates that the transmit beamforming matrices can be constrained to use the full power budget without any loss of optimality. This significantly reduces the solution space and simplifies the optimization problem. By using Proposition~\ref{pro:full power} and exploiting the fractional structure of the SINR and SCNR in \eqref{rateo} and \eqref{SCNR} respectively, we obtain the following proposition.
\begin{proposition}\label{pro:equivalence}
The original problem \eqref{P1} is equivalent to
    \begin{subequations}\label{P2}
    \begin{align}
    \label{P2obj}\max_{ \bm\Omega}\,\, &\delta_\rmc\sum_{k=1}^K \log(1+\widehat{\gamma}_{\rmc k})+\delta_\rms \sum_{m=1}^M \log(1+\widehat{\gamma}_{\rms m})\\
    \text{s.t.}\,\, 
    & \eqref{P1C1},\eqref{P1C3},
\end{align}
\end{subequations}
where 
\begin{align}
   & \widehat \gamma_{\rmc k} =  \frac{|\mathbf h_k^\H \ew_k|^2}{\sum_{i=1, i \neq k}^{K} |\mathbf h_k^\H \ew_i|^2 + \bar{\sigma}_k^2\|\eW\|_\rmf^2},\\
   & \widehat{\gamma}_{\rms m}=\frac{\|\ep_m^\H\mathbf G_m\eW \|_\rmf^2}{\sum_{j=1,j\neq m}^{M+C}\|\ep_m^\H\mathbf G_j\eW \|_\rmf^2 +\bar{\sigma}_\rms^2\|\ep_m^\H\|^2_\rmf\|\eW\|_\rmf^2 }.
\end{align}
with $\bar{\sigma}_k\triangleq \sigma_k/\sqrt{P_\rmt} $ and $\bar{\sigma}_s\triangleq \sigma_k/\sqrt{P_\rmt} $.

\end{proposition}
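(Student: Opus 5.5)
We prove Proposition~\ref{pro:equivalence} by exploiting the fact that $\widehat\gamma_{\rmc k}$ and $\widehat\gamma_{\rms m}$ are scale-invariant in $\eW$, and that they coincide with $\gamma_{\rmc k}$ and $\gamma_{\rms m}$ on the full-power sphere. By Proposition~\ref{pro:full power}, problem \eqref{P1} is equivalent to \eqref{P1full}. It therefore suffices to show that \eqref{P1full} and \eqref{P2} are equivalent: every feasible point of one maps to a feasible point of the other with the same objective value, and optimal solutions correspond.

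\emph{Agreement on the sphere.} If $\|\eW\|_\rmf^2=P_\rmt$, then $\bar\sigma_k^2\|\eW\|_\rmf^2=\sigma_k^2$ and $\bar\sigma_\rms^2\|\eW\|_\rmf^2=\sigma_\rms^2$, reading the definition of $\bar\sigma_\rms$ as $\sigma_\rms/\sqrt{P_\rmt}$. Hence $\widehat\gamma_{\rmc k}=\gamma_{\rmc k}$ and $\widehat\gamma_{\rms m}=\gamma_{\rms m}$, so the objective of \eqref{P2} equals that of \eqref{P1full} on the feasible set of \eqref{P1full}. Since the feasible set of \eqref{P1full} is contained in that of \eqref{P2}, the optimal value of \eqref{P2} is at least that of \eqref{P1full}.

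\emph{Scale invariance.} Every numerator and denominator term of $\widehat\gamma_{\rmc k}$ and $\widehat\gamma_{\rms m}$ is homogeneous of degree two in $\eW$. Therefore $\widehat\gamma$ is unchanged under $\eW\mapsto c\eW$ for any $c>0$. Given any feasible point of \eqref{P2} with $\eW\neq\mathbf 0$, we realise this scaling by replacing $\mathbf W_\rmd$ with $c\mathbf W_\rmd$, where $c=\sqrt{P_\rmt}/\|\eW\|_\rmf$. This choice is the key one: $\mathbf W_\rmd$ is unconstrained, whereas $\mathbf W_\rma$ and $\mathbf W_\rme$ cannot be rescaled without leaving $\mathcal C_\rma$ and $\mathcal C_\rme$. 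The receive variables are left unchanged. The new point satisfies \eqref{P1fullC1} and \eqref{P1C1}--\eqref{P1C3}, and its \eqref{P1full} objective equals the \eqref{P2} objective of the original point, by the scale invariance combined with agreement on the sphere. Hence the two optimal values coincide. Moreover, any optimizer of \eqref{P2}, once normalized, is an optimizer of \eqref{P1full} and hence of \eqref{P1}. Conversely, any optimizer of \eqref{P1} is full power by Proposition~\ref{pro:full power}, and is therefore an optimizer of \eqref{P2}.

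\emph{Main obstacle: the degenerate point $\eW=\mathbf 0$.} At $\eW=\mathbf 0$ (for instance $\mathbf W_\rmd=\mathbf 0$), the quantities $\widehat\gamma$ take the form $0/0$. We handle this by restricting \eqref{P2} to $\eW\neq\mathbf 0$, or equivalently defining the objective there as $0$. This loses nothing, because any nonzero full-power point attains a nonnegative objective. One should also check that $\|\ep_m\|\neq 0$; if $\ep_m=\mathbf 0$, then $\widehat\gamma_{\rms m}$ is undefined in the same way as $\gamma_{\rms m}$ in \eqref{P1}, so the same convention applies to both problems and equivalence is preserved. With these conventions in place, the argument above establishes the claimed equivalence.
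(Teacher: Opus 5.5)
Your proposal is correct and follows essentially the same route as the paper's proof. Both use the scale invariance of the objective of \eqref{P2} under rescaling $\mathbf W_\rmd$ to impose $\|\eW\|_\rmf^2=P_\rmt$ at no cost, observe that on this sphere the objective coincides with that of \eqref{P1full}, and then invoke Proposition~\ref{pro:full power}; your version is slightly more careful than the paper's, since it uses the normalization $\sqrt{P_\rmt}/\|\eW\|_\rmf$, reads $\bar\sigma_\rms$ as $\sigma_\rms/\sqrt{P_\rmt}$, handles the degenerate point $\eW=\mathbf 0$, and checks that optimizers correspond in both directions.
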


\begin{IEEEproof}
   See Appendix \ref{App:equivalence}.
\end{IEEEproof}
\smallskip
Proposition \ref{pro:equivalence} implies that problem \eqref{P1} can be effectively addressed by solving its reformulated version \eqref{P2}. Once a solution $ \{\mathbf W_\rme^\diamond, \mathbf W_\rma^\diamond,\mathbf W_\rmd^\diamond\}$ to problem \eqref{P2} is obtained, a corresponding feasible solution $ \mathbf W_\rmd^\ddagger$ to the original problem \eqref{P1} can be constructed as
\begin{equation}\label{scale W}
        \mathbf W_\rmd^\ddagger=\sqrt{\frac{P_\rmt}{\|\mathbf W_{\rme}^\diamond\mathbf W_{\rma}^\diamond\mathbf W_\rmd^\diamond\|_\rmf^2}}\mathbf W_\rmd^\diamond.
\end{equation}
Accordingly, the two formulations are equivalent in the sense that solving \eqref{P2} and applying the above transformation yields a valid solution to the original problem \eqref{P1}.
\begin{remark}
The reformulation from problem \eqref{P1} to \eqref{P2} brings the following advantages:
\begin{itemize}
    \item The optimization with respect to the digital beamforming matrix $\mathbf{W}_\rmd$ becomes unconstrained, which leads to closed-form solutions and significantly reduces computational complexity.
    \item The original coupling between the analog beamforming matrix, the DMA beamforming matrix, and the digital beamforming matrix in the power constraint is removed. This decoupling greatly simplifies the optimization of the analog and DMA components, allowing for more tractable algorithm design, as will be presented next.
\end{itemize}
\end{remark}

\subsection{Fractional Programming Method} \label{sec:FP method}

We aim to further transform \eqref{P2} into a more tractable reformulation. To this end, we present the following two lemmas.

\begin{lemma}[\!\!\cite{Shen2018fractional}]\label{lemma:LDT}
    Given ratios $\frac{c_k(x)}{d_k(x)}$ with $c_k(x)\geq 0$ and $d_k(x)>0$ for $k=1,\ldots,K$, the sum-of-logarithmic-ratios maximization problem:
    \begin{equation}
        \max_x\,\, \sum_{k=1}^K \log\left(1+\frac{c_k(x)}{d_k(x)}\right)
    \end{equation}
    is equivalent to
    \begin{equation}
        \max_{x,\{\eta_i\}_{i=1}^K}\,\, \sum_{k=1}^K\log(1+\eta_k)-\eta_k+\frac{(1+\eta_k)c_k(x)}{c_k(x)+d_k(x)}.
    \end{equation}
\end{lemma}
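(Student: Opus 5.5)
We prove Lemma~\ref{lemma:LDT} by maximizing the new objective over the auxiliary variables $\{\eta_k\}$ for a fixed $x$, and checking that the resulting partial maximum equals the original objective. Equivalence of the two problems then follows: maximizing jointly over $(x,\{\eta_k\})$ is the same as maximizing over $x$ the partial maximum in $\{\eta_k\}$.

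For fixed $x$, the new objective separates into the terms
\[
f_k(\eta_k)=\log(1+\eta_k)-\eta_k+\frac{(1+\eta_k)c_k(x)}{c_k(x)+d_k(x)},
\]
each depending only on its own $\eta_k$ over the domain $\eta_k>-1$. Write $a_k=c_k(x)/(c_k(x)+d_k(x))$. Since $c_k\ge 0$ and $d_k>0$, we have $a_k\in[0,1)$. Then
\[
f_k'(\eta_k)=\frac{1}{1+\eta_k}-1+a_k,\qquad f_k''(\eta_k)=-\frac{1}{(1+\eta_k)^2}<0,
\]
so $f_k$ is strictly concave. Its unique stationary point solves $1+\eta_k=1/(1-a_k)$, i.e., $\eta_k^\star=a_k/(1-a_k)=c_k(x)/d_k(x)$. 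Because $a_k<1$, this point is well defined and is the global maximizer.

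Next we substitute back. Using $1+\eta_k^\star=(c_k+d_k)/d_k$, we obtain
\[
f_k(\eta_k^\star)=\log\!\Big(1+\frac{c_k}{d_k}\Big)-\frac{c_k}{d_k}+\frac{c_k+d_k}{d_k}\cdot\frac{c_k}{c_k+d_k}=\log\!\Big(1+\frac{c_k}{d_k}\Big).
\]
Summing over $k$ shows that $\max_{\{\eta_k\}}$ of the new objective equals $\sum_k\log(1+c_k(x)/d_k(x))$ for every $x$.

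It follows that the two problems share the same optimal value. Moreover, $x^\star$ is optimal for the original problem if and only if $(x^\star,\{c_k(x^\star)/d_k(x^\star)\})$ is optimal for the reformulated one.

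There is no real obstacle here; the only care needed is the domain. We need $1+\eta_k>0$ for the logarithm to be defined, and $d_k>0$ ensures both $a_k<1$ and that the maximizer $\eta_k^\star$ is finite. The same argument also justifies the closed-form $\eta$-update later used in the BCD algorithm.
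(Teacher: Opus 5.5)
Your argument is correct. For fixed $x$, each term is strictly concave in $\eta_k$ on $\eta_k>-1$ and is uniquely maximized at $\eta_k^\star=c_k(x)/d_k(x)$, where it equals $\log(1+c_k(x)/d_k(x))$; hence the partial maximum over $\{\eta_k\}$ reproduces the original objective and the optimal solutions correspond. The paper gives no proof of its own (it only cites Shen and Yu), but this partial-maximization check is the standard verification of that transform and is exactly what underlies the paper's closed-form update $\eta_{\rmc k}^\star=\widehat{\gamma}_{\rmc k}$.
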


\begin{lemma}[\!\!\cite{Shen2018fractional}] \label{lemma:QT}
    Given ratios $\frac{|a_k(x)|^2}{b_k(x)}$ with $a_k(x)\in\mathbb C$ and $b_k(x)>0$ for $k=1,\ldots,K$, the sum-of-ratios maximization problem
    \begin{equation}
        \max_x \,\,\sum_{k=1}^K\frac{|a_k(x)|^2}{b_k(x)}
    \end{equation}
    is equivalent to
    \begin{equation}
        \max_{x,\{\beta_i\}_{i=1}^K}\,\,\sum_{k=1}^K2\Re\{\beta_k^*a_k(x)\}-|\beta_k|^2b_k(x).
    \end{equation}
\end{lemma}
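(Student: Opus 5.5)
The plan is to show that, for every fixed $x$, the inner maximization over the auxiliary variables $\{\beta_k\}$ gives back exactly the original sum of ratios. Equivalence of the two problems then follows from the elementary identity $\max_{x,\beta} f(x,\beta)=\max_x \max_\beta f(x,\beta)$. Since the objective
\begin{equation*}
f(x,\{\beta_k\})=\sum_{k=1}^K \left(2\Re\{\beta_k^*a_k(x)\}-|\beta_k|^2b_k(x)\right)
\end{equation*}
is separable in the $\beta_k$, it suffices to treat each $k$ separately with $x$ held fixed.

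\textbf{Per-term maximization.} Fix $x$ and $k$, and write $a=a_k(x)\in\mathbb C$ and $b=b_k(x)>0$. I will complete the square:
\begin{equation*}
2\Re\{\beta^* a\}-|\beta|^2 b=\frac{|a|^2}{b}-b\left|\beta-\frac{a}{b}\right|^2 .
\end{equation*}
This is verified by expanding $b|\beta-a/b|^2=b|\beta|^2-2\Re\{\beta^*a\}+|a|^2/b$. Because $b>0$, the right-hand side is a strictly concave function of $\beta$, bounded above by $|a|^2/b$. The bound is attained uniquely at $\beta^\star=a/b$. Summing over $k$ gives
\begin{equation*}
\max_{\{\beta_k\}} f(x,\{\beta_k\})=\sum_{k=1}^K\frac{|a_k(x)|^2}{b_k(x)},
\qquad \beta_k^\star(x)=\frac{a_k(x)}{b_k(x)} .
\end{equation*}

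\textbf{Equivalence of the problems.} First, if $x^\star$ solves the original sum-of-ratios problem, then $(x^\star,\{\beta_k^\star(x^\star)\})$ attains in the transformed problem the same value, which is the largest value attainable there. Conversely, if $(\hat x,\{\hat\beta_k\})$ is optimal for the transformed problem, then $\hat\beta_k=a_k(\hat x)/b_k(\hat x)$ by uniqueness of the inner maximizer. Hence the optimal value equals $\sum_k |a_k(\hat x)|^2/b_k(\hat x)$, so $\hat x$ is optimal for the original problem. The two problems therefore share optimal $x$ and optimal value. Any constraint set on $x$, such as \eqref{P1C1} and \eqref{P1C3} when the lemma is applied later, carries through unchanged, since it does not involve the $\beta_k$.

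\textbf{Main obstacle.} There is no real difficulty here; the only point needing care is the strict positivity $b_k(x)>0$. Positivity guarantees concavity and a finite, uniquely attained inner maximum. Without it the inner supremum could be $+\infty$, or the maximizer could fail to be unique, which would break the converse direction. In the later application, positivity holds automatically because the denominators of $\widehat\gamma_{\rmc k}$ and $\widehat\gamma_{\rms m}$ contain the strictly positive terms $\bar\sigma_k^2\|\eW\|_\rmf^2$ and $\bar\sigma_\rms^2\|\ep_m\|^2\|\eW\|_\rmf^2$ whenever $\eW\neq\mathbf 0$ and $\ep_m\neq\mathbf 0$. The vector case $\|\ep_m^\H\mathbf G_m\eW\|_\rmf^2$ used later follows by the same completion of squares with $\Re\{\bm\beta^\H\mathbf a\}$ and $\|\bm\beta\|^2$ in place of the scalar expressions.
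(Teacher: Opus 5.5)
Your proof is correct and follows essentially the standard route. The paper gives no proof of its own and defers to \cite{Shen2018fractional}, where the argument likewise maximizes out the auxiliary variable for fixed $x$ (the first-order condition gives $\beta_k^\star=a_k(x)/b_k(x)$) and substitutes back to recover $\sum_k |a_k(x)|^2/b_k(x)$, which is exactly what your completion of the square does. One small refinement: uniqueness of the inner maximizer is not needed for the converse direction, since any jointly optimal $(\hat x,\{\hat\beta_k\})$ must have $\{\hat\beta_k\}$ maximizing $f(\hat x,\cdot)$, and that maximum equals the ratio sum at $\hat x$ whether or not the maximizer is unique; only $b_k(x)>0$, which keeps the inner maximum finite, actually matters.
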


Applying Lemma \ref{lemma:LDT} to the term $\log(1+\widehat{\gamma}_{\rmc k})$ in \eqref{P2obj} and setting $c_k(x)=|\mathbf h^\H_k\ew_k|^2$ and $d_k(x)=\sum_{i=1, i \neq k}^{K} |\mathbf h_k^\H \ew_i|^2 + \bar{\sigma}_k^2\|\eW\|_\rmf^2$, we construct a surrogate function of $\log(1+\widehat{\gamma}_{\rmc k})$ as follows:
\begin{align}\label{surrogate1}
\tilde{R}_{\rmc k}\triangleq\log(1&+\eta_{\rmc k}) -\eta_{\rmc k}+\frac{(1+\eta_{\rmc k})|\mathbf h^\H_k\ew_k|^2}{\sum_{i=1}^{K} |\mathbf h_k^\H \ew_i|^2+\bar{\sigma}_k^2\|\eW\|_\rmf^2},
\end{align}
where $\eta_{\rmc k}$ is an introduced auxiliary variable. Then, the ratio in $\tilde{R}_{\rmc k}$ can be further decoupled using Lemma \ref{lemma:QT} with $a_k(x)=\mathbf h_k^\H\ew_k $ and $b_k(x)=\sum_{i=1}^{K} |\mathbf h_k^\H \ew_i|^2 + \bar{\sigma}_k^2\|\eW\|_\rmf^2$, leading to the following surrogate function of $\tilde{R}_{\rmc k}$:
\begin{align}\label{surrogate2}
\bar{R}_{\rmc k}&\triangleq\log(1+\eta_{\rmc k}) -\eta_{\rmc k}+2\sqrt{1+\eta_{\rmc k}}\Re\{\mathbf h^\H_k\ew_k \beta_{\rmc k}^*\}\nonumber\\
    &-|\beta_{\rmc k}|^2\left(\sum_{i=1}^{K} |\mathbf h_k^\H \ew_i|^2 + \bar{\sigma}_k^2\|\eW\|_\rmf^2\right),
\end{align}
where $\beta_{\rmc k}$ is another auxiliary variable. Similarly, applying Lemma \ref{lemma:LDT} and \ref{lemma:QT} to $\log(1+\widehat{\gamma}_{\rms m})$ in \eqref{P2obj}, we construct the surrogate function
\begin{align}\label{surrogate3}
    \bar{R}_{\rms m}&\triangleq \log(1+\eta_{\rms m})-\eta_{\rms m}+2\sqrt{1+\eta_{\rms m}}\Re\{\ep_m^\H\mathbf G_m\eW\bm\beta_{\rms m}^\H \}\nonumber\\
    &-\|\bm\beta_{\rms m}\|^2_\rmf\left(\sum_{j=1}^{M+C}\!\!\|\ep^\H_m\mathbf G_j\eW\|_\rmf^2+\bar{\sigma}_\rms^2\|\ep_m\|_\rmf^2\|\eW\|_\rmf^2 \!\right)\!.
\end{align}

Using surrogate functions \eqref{surrogate2} and \eqref{surrogate3}, problem \eqref{P2} can be reformulated as
  \begin{subequations}\label{P3}
    \begin{align}
    \label{P3obj}\max_{\bm \Omega, \bm\eta,\bm\beta}\,\, &\delta_\rmc\sum_{k=1}^K \bar R_{\rmc k}+\delta_\rms \sum_{m=1}^M \bar R_{\rms m}\\
    \text{s.t.}\,\, 
    & \eqref{P1C1},\eqref{P1C3},
\end{align}
\end{subequations}
where $\bm\eta\triangleq \{\eta_{\rmc 1},\ldots, \eta_{\rmc K},\eta_{\rms 1},\ldots,\eta_{\rms M}\}$ and $\bm\beta\triangleq\{\beta_{\rmc 1},\ldots,\beta_{\rmc K},\bm\beta_{\rms 1},\ldots,\bm\beta_{\rms M}\}$ collects all the auxiliary variables variables. Although the joint optimization of all variables in problem \eqref{P3} is still intractable, it facilitates a multi-block algorithm that allows for optimizing one block of variables while holding the others fixed. Consequently, we employ the BCD method to optimize the variables iteratively, as presented next.  

\subsection{BCD Framework for Solving \eqref{P3}}
\label{sec: BCD framework}
To facilitate the design, we reformulate the objective function in \eqref{P3obj} into a more compact and explicit representation. Specifically, the communications-related term is expressed as
\begin{align}
    \sum_{k=1}^K \bar{R}_{\rmc k}&=\bar C_\rmc+ 2\Re\{\rmtr(\eW\bm\Sigma_1^\H\mathbf H^\H)\}-S_\rmc\|\eW\|_\rmf^2\nonumber\\
    &-\rmtr(\eW\eW^\H\mathbf H\mathbf \Sigma_2\mathbf H^\H),
\end{align}
where the coefficient scalars and matrices are defined as
\begin{align*}
    &\bar C_{\rmc}\triangleq \sum_{k=1}^K \log(1+\eta_{\rmc k})-\eta_{\rmc k},\,\,
    S_\rmc\triangleq \sum_{k=1}^K \bar{\sigma}_k^2|\beta_k|^2,\\
    &\bm\Sigma_1\triangleq \mathrm{diag}(\sqrt{1+\eta_{\rmc 1}}\beta_{\rmc 1},\ldots,\sqrt{1+\eta_{\rmc K}}\beta_{\rmc K}),\\
    &\bm\Sigma_2\triangleq \mathrm{diag}(|\beta_{\rmc 1}|^2\ldots,|\beta_{\rmc K}|^2),\,\,\mathbf H\triangleq [\mathbf h_1,\ldots,\mathbf h_K].
\end{align*}
Similarly, the sensing-related term can be written as
\begin{align}
    \sum_{m=1}^M \bar{R}_{\rms m}&\!= \!\bar C_{\rms}+2\Re\{\eW\bm\Sigma_3\}\!-\!S_\rms\|\eW\|_\rmf^2-\mathrm{Tr}(\eW\eW^H\bm\Sigma_4),
\end{align}
where 
\begin{align*}
    &\bar C_{\rms}\triangleq \sum_{m=1}^M\! \log(1\!+\!\eta_{\rms m})\!-\!\eta_{\rms m},
    S_\rms\triangleq \sum_{m=1}^M \bar{\sigma}_\rms^2\!\|\bm\beta_{\rms m}\|_\rmf^2\|\ep_m\|_\rmf^2,\\
    &\bm\Sigma_3\triangleq \sum_{m=1}^M \sqrt{1+\eta_{\rms m}}\bm\beta_{\rms m}^\H\ep_m^H\mathbf G_m,\\
    &\bm\Sigma_4\triangleq \sum_{m=1}^M \sum_{j=1}^{M+C} \|\bm\beta_{\rms m}\|^2_\rmf \mathbf G_j\ep_m\ep_m^\H\mathbf G_j^\H.
\end{align*}
 By combining the communications and sensing terms, we reformulate problem \eqref{P3} as
\begin{subequations}\label{P4}
    \begin{align}
    \max_{\bm\Omega,\bm\eta,\bm\beta}\quad &2\Re\{\rmtr(\eW\mathbf C_1^\H)\}-\rmtr(\eW\eW^\H\mathbf C_2)\\
    \text{s.t.}\,\, 
    & \eqref{P1C1},\eqref{P1C3},
\end{align}
\end{subequations}
where
\begin{subequations}\label{eq:C12}
    \begin{align}
    &\mathbf C_1\triangleq \delta_\rmc\mathbf H\bm\Sigma_1+\delta_\rms \bm\Sigma_3^\H, \\
    &\mathbf C_2\triangleq \mathbf \delta_\rmc \mathbf H\bm\Sigma_2\mathbf H^\H+\delta_\rms \bm\Sigma_4+(\delta_\rmc S_\rmc+\delta_\rms  S_\rms)\mathbf I_{\Nr}.
\end{align}
\end{subequations}
In the following, we present the solution to \eqref{P4} using a BCD approach.

\subsubsection{Updating Auxiliary Variables}
Given $\bm\Omega$, the subproblems with respect to $\{\bm\eta,\bm\beta\}$ are unconstrained and convex. By examining their first-order optimality conditions, the optimal solutions for $\{\bm\eta,\bm\beta\}$ can be explicitly obtained in closed-forms as follows:
\begin{align}  \label{update auxilary variables}  \eta_{\rmc k}^\star&=\widehat{\gamma}_{\rmc k},\quad \eta_{\rms m}^\star=\widehat{\gamma}_{\rms m}\\
    \beta_{\rmc k}^\star&=\frac{\sqrt{1+\eta_{\rmc k}}\mathbf h^\H_k\ew_k}{\sum_{i=1}^{K} |\mathbf h_k^\H \ew_i|^2 + \bar{\sigma}_k^2\|\eW\|_\rmf^2},\\
    \bm\beta_{\rms m}^\star&=\frac{\sqrt{1+\eta_{\rms m}}\ep_m^\H\mathbf G_m\eW}{\sum_{j=1}^{M+C}\|\ep_m^\H\mathbf G_j\eW\|_\rmf^2+\bar{\sigma}_\rms^2{P_\rmt}\|\ep_m\|_\rmf^2 \|\eW\|_\rmf^2}.
\end{align}

\subsubsection{Updating Digital Precoder}
Thanks to the transformation from \eqref{P1} to \eqref{P2}, the subproblem of \eqref{P3} with respect to the digital beamforming matrix $\mathbf W_\rmd$ becomes unconstrained and convex, and can be formulated as
\begin{align}\label{P4W}
    \max_{\mathbf W_\rmd}\,\, &2\Re\{\rmtr(\mathbfeW\mathbf C_1^\H)\}\!\nonumber\\
    &-\!\rmtr(\mathbfeW(\mathbfeW)^\H\mathbf C_2).
\end{align}
By examining the first-order optimality condition, the optimal solution to subproblem \eqref{P4W} is given by the following closed-form expression:
\begin{equation}\label{update W}
    \mathbf W^\star=(\mathbf W_{\rma}^\H\mathbf W_{\rme}^\H\mathbf C_2\mathbf W_{\rme}\mathbf W_{\rma})^\dagger\mathbf W_{\rma}^\H\mathbf W_{\rme}^\H\mathbf C_1,
\end{equation}
where $(\cdot)^\dagger$ denotes the pseudo-inverse, which ensures a valid solution even when the matrix $\mathbf W_{\rma}^\H\mathbf W_{\rme}^\H\mathbf C_2\mathbf W_{\rme}\mathbf W_{\rma}$ is singular.

\subsubsection{Updating Analog Precoder}
\label{Sec:update analog}
Given other variables, the optimization problem for updating $\mathbf W_{\rma}$ can be expressed as
\begin{subequations}\label{P4F}
    \begin{align}
    \max_{\mathbf W_{\rma}}\,\, &2\Re\{\rmtr(\mathbf W_{\rma}\mathbf W_\rmd\mathbf C_1^\H\mathbf W_{\rme})\}\nonumber\\
    &-\!\rmtr(\mathbf W_{\rma}\mathbf W_\rmd\mathbf W_\rmd^\H\mathbf W_{\rma}^\H\mathbf W_{\rme}^\H\mathbf C_2\mathbf W_{\rme})\\
    \text{s.t.}\,\,&|[\mathbf W_{\rma}]_{i,j}|=1, \forall i\in\mathcal N_{\mathrm w},\forall j\in\mathcal N_{\mathrm{rf}}.
\end{align}
\end{subequations}
 In the following, we employ an efficient algorithm, namely the shifted generalized power iteration (SGPI), to solve problem \eqref{P4F}. Specifically, since the equality $\mathrm{tr}(\mathbf W_{\rma}\mathbf W_{\rma}^\H)=\Nw\Nrf$ holds for any feasible point, we can add the term $\lambda_1 \mathrm{tr}(\mathbf W_{\rma}\mathbf W_{\rma}^\H)$ to the objective function of \eqref{P4F}, where $\lambda_1\geq0$ is a predefined constant, without altering the optimal solution. With this transformation, we define an equivalent objective function $$
     g_1(\mathbf W_{\rma})\triangleq 2\Re\{\rmtr(\mathbf W_{\rma}^\H \mathbf L_1)\}+\rmtr(\mathbf W_{\rma}(\lambda_1\mathbf W_{\rma}^\H-\mathbf L_2\mathbf W_{\rma}^\H\mathbf L_3)),$$ where
\begin{align}\label{eq:L123}
 \mathbf L_1\triangleq \mathbf W_{\rme}^\H\mathbf C_1\mathbf W_\rmd^\H,\,\,\mathbf L_2\triangleq \mathbf W_\rmd\mathbf W_\rmd^\H,\,\, \mathbf L_3\triangleq \mathbf W_{\rme}^\H\mathbf C_2\mathbf W_{\rme}.
\end{align}
By choosing 
$\lambda_1=\widetilde{\omega}(\mathbf L_2)\times\widetilde{\omega}(\mathbf L_3)$, where $\widetilde{\omega}(\cdot)$ denotes the largest eigenvalue of a Hermitian matrix, the function $g_1(\mathbf W_{\rma})$ becomes convex. We then invoke the following lemma.

\begin{lemma}[\!\!\cite{boyd2004convex}]\label{lemma: linear lower bound}
Any convex function $g(\mathbf W_{\rma})$ is lower bounded by its first-order Taylor expansion at any feasible point $\mathbf W_{\rma}'$: 
\begin{equation}\label{linear lemma}
g(\mathbf W_{\rma})\geq g(\mathbf W_{\rma}')+\langle\nabla g(\mathbf W_{\rma})\big|_{\mathbf W_{\rma}=\mathbf W_{\rma}'},\mathbf W_{\rma}-\mathbf W_{\rma}'\rangle,
\end{equation}
where the equality holds if and only if $ \mathbf W_{\rma}=\mathbf W_{\rma}'$.
\end{lemma}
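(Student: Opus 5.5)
We would prove the inequality using only the definition of convexity, by restricting $g$ to the segment between the two points. Throughout, $\mathbf W_{\rma}$ is treated as a point of the real vector space $\mathbb C^{\Nw\times\Nrf}\cong\mathbb R^{2\Nw\Nrf}$. The inner product is $\langle\mathbf A,\mathbf B\rangle\triangleq\Re\{\rmtr(\mathbf A^\H\mathbf B)\}$, and $\nabla g$ is the (Wirtinger-type) gradient that makes $\langle\nabla g,\cdot\rangle$ the real differential of $g$. This is the convention under which the linearization of $g_1$ is used later. Fix $\mathbf W_{\rma}'$ and $\mathbf W_{\rma}$, set $\mathbf D\triangleq\mathbf W_{\rma}-\mathbf W_{\rma}'$, and define the scalar function $\varphi(t)\triangleq g(\mathbf W_{\rma}'+t\mathbf D)$ for $t\in[0,1]$. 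Since $g$ is convex, $\varphi$ is convex on $[0,1]$.

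\textbf{Inequality.} For $t\in(0,1]$, convexity gives $\varphi(t)\le(1-t)\varphi(0)+t\varphi(1)$, which rearranges to
\begin{equation*}
\frac{\varphi(t)-\varphi(0)}{t}\le \varphi(1)-\varphi(0).
\end{equation*}
Let $t\downarrow0$. Because $g$ is differentiable (for $g_1$ it is a real quadratic form plus a linear term in $(\Re\mathbf W_{\rma},\Im\mathbf W_{\rma})$), the left-hand side tends to the directional derivative $\varphi'(0)=\langle\nabla g(\mathbf W_{\rma}')\,,\mathbf D\rangle$. The limit reads $g(\mathbf W_{\rma})\ge g(\mathbf W_{\rma}')+\langle\nabla g(\mathbf W_{\rma})|_{\mathbf W_{\rma}=\mathbf W_{\rma}'},\mathbf W_{\rma}-\mathbf W_{\rma}'\rangle$, which is \eqref{linear lemma}. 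The only step needing care is the identification of $\varphi'(0)$ with the real inner product. For $g_1$ this can be checked directly by expanding $g_1(\mathbf W_{\rma}'+t\mathbf D)$: it is a quadratic polynomial in $t$, and the coefficient of $t$ is exactly $2\Re\{\rmtr(\mathbf D^\H(\mathbf L_1+\lambda_1\mathbf W_{\rma}'-\mathbf L_3\mathbf W_{\rma}'\mathbf L_2))\}$.

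\textbf{Equality case.} Sufficiency is immediate: if $\mathbf W_{\rma}=\mathbf W_{\rma}'$, then $\mathbf D=\mathbf 0$ and both sides coincide. Necessity is the main obstacle, because mere convexity does not give it; an affine $g$ attains equality everywhere. We would therefore prove the ``only if'' part under strict convexity. If $g$ is strictly convex and $\mathbf D\neq\mathbf 0$, then $\varphi$ is strictly convex. Hence $(\varphi(t)-\varphi(0))/t$ is strictly increasing in $t$, so $\varphi'(0)<\varphi(1)-\varphi(0)$ and the inequality is strict.

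For the function actually used, $g_1$ has quadratic part $\lambda_1\|\mathbf W_{\rma}\|_\rmf^2-\rmtr(\mathbf W_{\rma}\mathbf L_2\mathbf W_{\rma}^\H\mathbf L_3)$. Its real Hessian is $\lambda_1\mathbf I-(\mathbf L_2^{\T}\otimes\mathbf L_3)$ in vectorized form, and the eigenvalues of $\mathbf L_2^{\T}\otimes\mathbf L_3$ are products of eigenvalues of $\mathbf L_2$ and $\mathbf L_3$, both of which are positive semidefinite. Thus the choice $\lambda_1=\widetilde\omega(\mathbf L_2)\widetilde\omega(\mathbf L_3)$ gives convexity, and any $\lambda_1>\widetilde\omega(\mathbf L_2)\widetilde\omega(\mathbf L_3)$ gives strong convexity. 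In the latter case the gap is exactly $(\varphi(1)-\varphi(0))-\varphi'(0)=\mathrm{vec}(\mathbf D)^\H(\lambda_1\mathbf I-\mathbf L_2^{\T}\otimes\mathbf L_3)\mathrm{vec}(\mathbf D)>0$ for $\mathbf D\neq\mathbf 0$. We would state this as the precise condition under which the ``if and only if'' clause holds.

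If one insists on $\lambda_1$ equal to the product, we would remark on what remains true. The inequality still holds, and equality forces $\mathrm{vec}(\mathbf D)$ to lie in the top eigenspace of $\mathbf L_2^{\T}\otimes\mathbf L_3$. This is all that the subsequent minorize–maximize argument needs for monotonic ascent of the objective.
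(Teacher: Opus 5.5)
Your argument is correct and is the standard restriction-to-a-line proof of the first-order convexity condition from Boyd and Vandenberghe. The paper gives no proof of its own and simply cites that result, so your route coincides with the source it relies on.

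You are also right that the ``only if'' half of the equality clause is false for a merely convex $g$, with any affine $g$ as a counterexample, and that it needs strict convexity. With the paper's choice $\lambda_1=\widetilde{\omega}(\mathbf L_2)\widetilde{\omega}(\mathbf L_3)$, the function $g_1$ need not be strictly convex. As you note, the minorize--maximize ascent only uses the inequality together with equality at $\mathbf W_{\rma}=\mathbf W_{\rma}'$.
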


Base on Lemma \ref{lemma: linear lower bound}, by setting $\mathbf W_{\rma}'=\mathbf W_{\rma}^{[n]}$, a linear lower bound of problem \eqref{P4F} at iteration $n$ can be derived as
\begin{align}\label{P4F:re3}
     \max_{\mathbf W_{\rma}\in\mathcal C_{\rm A}}\,\, &\Re\{\rmtr(\mathbf W_{\rma}^\H \mathbf L_1)\}+\Re\{\rmtr(\mathbf W_{\rma}^\H(\lambda_1\mathbf W_{\rma}^{[n]}-\mathbf L_3\mathbf W_{\rma}^{[n]}\mathbf L_2))\},
\end{align}
whose optimal solution is given by
\begin{equation}\label{Update F}
	\mathbf W_{\rma}^\star= \exp\left(\rmj \angle\left(\mathbf L_1+\lambda_1\mathbf W_{\rma}^{[n]}-\mathbf L_3\mathbf W_{\rma}^{[n]}\mathbf L_2 \right)\right),
\end{equation}
with $\angle \mathbf X$ denoting the element-wise phase angle of $\mathbf X$. We summarize the SGPI algorithm for solving \eqref{P4F} in Algorithm \ref{alg: SGPI}.

\begin{algorithm}[t]
    \textbf{Input}: Compute $\mathbf L_1,\mathbf L_2$ and $\mathbf L_3$ based on \eqref{eq:L123}\;
\textbf{Initialize}: $n\leftarrow0, \mathbf{F}\in \mathcal{C}_{\rm A}$\;
Compute the shift parameter $\lambda_1=\widetilde{\omega}(\mathbf L_2)\times\widetilde{\omega}(\mathbf L_3)$\;
\Repeat{The objective function convergence}{

Update $ \mathbf W_{\rma}^{[n+1]} $ based on \eqref{Update F}\;
$ n\leftarrow n+1 $\;}
     
\caption{The SGPI algorithm for solving \eqref{P4F}}

\label{alg: SGPI}				
\end{algorithm}

\subsubsection{Updating DMA Precoder}
\label{Sec:update DMA}
Since $\mathbf W_{\rme}$ is a linear function of $\bm\Psi$, we directly optimize $\bm\Psi$. With all other variables fixed, \eqref{P4} can be solved with respect to $\bm\Psi$ in the following problem:
\begin{subequations}\label{P4M}
    \begin{align}
    \label{P4obj}\max_{\mathbf \Psi}\,\, &2\Re\{\rmtr(\mathbf W_{\rme}^\H\mathbf L_4)\}\!-\!\rmtr(\mathbf W_{\rme}^\H\mathbf C_2\mathbf W_{\rme}\mathbf L_5)\\
    \text{s.t.}\,\,&\mathbf W_{\rme}=\frac{1}{2}(\rmj\mathbf Q+\mathbf \Psi\circ\mathbf Q),\\
    \label{P4MC2}&|[\mathbf \Psi]_{m,n}|=
    \begin{cases}
        1,&m\in\mathcal N_{\rm r}, n=\lfloor \frac{m-1}{\Nu}\rfloor+1,\\
        0,  & \text{otherwise},
    \end{cases}
\end{align}
\end{subequations}
where 
\begin{equation}\label{eq:L45}
    \mathbf L_4\triangleq \mathbf C_1\mathbf W_\rmd^\H\mathbf W_{\rma}^\H ,\,\, \mathbf L_5\triangleq \mathbf W_{\rma}\mathbf W_\rmd\mathbf W_\rmd^\H\mathbf W_{\rma}^\H.
\end{equation}
This problem shares a similar structure with \eqref{P4F} and can thus be addressed using the proposed SGPI method. However, the partially connected architecture of the DMA imposes a sparse structure on the DMA beamforming matrix, where only $\Nr$ out of $\Nr\Nw$ elements are nonzero. This sparsity can be exploited to further reduce computational complexity.

By defining $\widetilde{\mathbf w}_\rme\triangleq \rmvec(\mathbf W_{\rme}), \widetilde{\mathbf l}\triangleq \rmvec(\mathbf L_4), \widetilde{\bm\psi}\triangleq \rmvec(\mathbf \Psi)$ and $\widetilde{\mathbf q}\triangleq \rmvec(\mathbf Q)$, problem \eqref{P4M} can be equivalently written as:
\begin{subequations}\label{P4Mvec}
    \begin{align}
    \max_{\widetilde{\bm\psi}}\,\, &2\Re\{\widetilde{\mathbf w}_\rme^\H\widetilde{\mathbf l}\}-\!\widetilde{\mathbf w}_\rme^\H(\mathbf L_5^\T\otimes\mathbf C_2)\widetilde{\mathbf w}_\rme\\
    \text{s.t.}\,\,&\widetilde{\mathbf w}_\rme=\frac{1}{2}\left(\rmj\widetilde{\mathbf q}+\widetilde{\bm\psi}\circ\widetilde{\mathbf q}\right),\\
    &|[\widetilde{\bm \psi}]_{(n-1)\Nr+m}|=
    \begin{cases}
        1,&m\in\mathcal N_{\rm r}, n=\lfloor \frac{m-1}{\Nu}\rfloor+1,\\
        0,  & \text{otherwise},
    \end{cases}
\end{align}
\end{subequations}
where $\otimes$ refers to the Kronecker product. To reduce the variable dimension in this vectorized formulation, we introduce the linear transformation:
\begin{equation}\label{Linear Trans}
    \widetilde{\mathbf w}_\rme=\mathbf R \mathbf w_{\rme},
\end{equation}
where $\mathbf w_{\rme}\triangleq[\mathbf w_{\rme 1}^\T,\ldots,\mathbf w_{\rme \Nw}^\T]^\T \in\mathbb C^{\Nr\times 1}$ collects all the nonzero elements of $\mathbf W_{\rme}$ and $\mathbf R\in\{0,1\}^{\Nr\Nw\times \Nr}$ is a binary transformation matrix that embeds $\mathbf w_{\rme}$ into the appropriate locations of $\widetilde{\mathbf w}_\rme$. Specifically, $\mathbf R$ is defined element-wise as
\begin{equation}
    [\mathbf R]_{m,n}=
    \begin{cases}
        1,&n\in\mathcal N_{\rm r}, m=\lfloor \frac{n-1}{\Nu}\rfloor\Nr+n,\\
        0,  & \text{otherwise},
    \end{cases}
\end{equation}
Substituting \eqref{Linear Trans} into \eqref{P4Mvec}, the problem reduces to
\begin{subequations}\label{P4Mvec2}
    \begin{align}
    \max_{\bm\psi}\,\, &2\Re\{\mathbf w_{\rme}^\H\mathbf R^\H\widetilde{\mathbf l}\}-\!\mathbf w_{\rme}^\H\mathbf R^\H(\mathbf L_5^\T\otimes\mathbf C_2)\mathbf R\mathbf w_{\rme}\\
   \label{P4Mvec2C1} \text{s.t.}\,\,&\mathbf w_{\rme}=\frac{1}{2}\left(\rmj\mathbf q+\bm\psi\circ\mathbf q\right),\\
    &|[\bm\psi]_m|=1, \forall m\in\mathcal N_r,
\end{align}
\end{subequations}
where $\bm\psi$ and $\mathbf q$ respectively collect the nonzero elements in $\bm\Psi$ and $\mathbf Q$ analogously to $\mathbf w_{\rme}$. 

Follow the same SGPI approach used to solve \eqref{P4F}, we apply it to \eqref{P4Mvec2} as well. To ensure the convexity of the objective function, the shift parameter $\lambda_2$ is chosen as
\begin{equation}\label{update lambda}
\lambda_2=\widetilde{\omega}(\mathbf R^\H(\mathbf L_5^\T\otimes \mathbf C_2)\mathbf R).    
\end{equation}
Let $g_2(\bm\psi)$ denote the objective function in \eqref{P4Mvec2}, its gradient with respect to $\bm\psi$ is given by
\begin{equation}\label{gradient function}
    \nabla g_2(\bm\psi)=2\left((\mathbf R^\H(\mathbf L_5^\T\otimes \mathbf C_2)\mathbf R)\mathbf w_{\rme}+\mathbf R^\H\rmvec(\mathbf L_4)\right)\circ \mathbf q^*.
\end{equation}
For a given feasible point $\bm\psi^{[n]}$ at iteration $n$, $\bm\psi$ admits the following clsoed-form solution:
\begin{equation}\label{Update psi}
\bm\psi^\star=\exp{\left(\rmj \angle \left(\lambda_2\bm\psi^{[n]}+ \nabla g_2(\bm\psi)\big|_{\bm\psi=\bm\psi^{[n]}} \right)\right)}.
\end{equation}
Then, the DMA beamformer can be reconstructed based on \eqref{Linear Trans} and \eqref{P4Mvec2C1}. We summarize the SGPI algorithm for solving problem \eqref{P4M} in Algorithm \ref{alg: SGPI for DMA}.
\begin{algorithm}[t]
    \textbf{Input}:  Compute $\mathbf L_4,\mathbf L_5$ and $\mathbf C_2$ based on \eqref{eq:L45} and \eqref{eq:C12}\;
\textbf{Initialize}: $n\leftarrow0, \mathbf{M}\in \mathcal{C}_{\rm D}$\;
Compute the shift parameter $\lambda_2$ using \eqref{update lambda}\;
Construct the gradient function in \eqref{gradient function}\; 
\Repeat{The objective function convergence}{

Update $ \bm\psi^{[n+1]} $ according to \eqref{Update F}\;
$ n\leftarrow n+1 $\;}
Construct DMA beamformer $\mathbf W_{\rme}$ by \eqref{Linear Trans} and \eqref{P4Mvec2C1}    
\caption{The SGPI algorithm for solving \eqref{P4M}}

\label{alg: SGPI for DMA}				
\end{algorithm}

\subsubsection{Updating Combiners}

The update procedure for the sensing combining matrices follows the same structure as the transmit beamforming updates. For brevity, we outline only the key steps. First, to facilitate the design of the digital, analog, and DMA combiners, we  rewrite problem \eqref{P4} into a compact form with respect to ${\mathbf P_\rmd,\mathbf P_{\rma},\mathbf P_{\rme}}$:
\begin{subequations}\label{P5}
    \begin{align} \max_{\bm\Omega,\bm\eta,\bm\beta}\,\, &2\Re\{\rmtr(\mathbf P_{\rme}\mathbf P_{\rma}\mathbf P_\rmd\mathbf C_3^\H)\}\!-\!\rmtr(\mathbf P_{\rme}\mathbf P_{\rma}\mathbf P_\rmd\mathbf C_4(\mathbf P_{\rme}\mathbf P_{\rma}\mathbf P_\rmd)^\H\mathbf C_5)\\
    \text{s.t.}\,\, 
    & \eqref{P1C1},\eqref{P1C3},
\end{align}
\end{subequations}
where 
\begin{subequations}\label{eq:C345}
    \begin{align}
        &\mathbf C_3\triangleq[\sqrt{1+\alpha_{\rms 1}}\mathbf G_1\eW\bm\theta_{\rms 1}^\H,\ldots,\sqrt{1+\alpha_{\rms M}}\mathbf G_M\eW\bm\theta_{\rms M}^\H],\\
        &\mathbf C_4\triangleq \mathrm{diag}(\|\bm\theta_{\rms 1}\|_\rmf^2,\ldots,\|\bm\theta_{\rms M}\|_\rmf^2), \\
        &\mathbf C_5\triangleq \sum_{j=1}^{M+C} \mathbf G_j\eW\eW^\H\mathbf G_j^\H+\sigma_\rms^2\|\eW\|_\rmf^2/P_\rmt\mathbf I.
    \end{align}
\end{subequations}

\noindent
The digital combiner is updated with
\begin{equation}\label{update P}
    \mathbf P_\rmd^\star=(\mathbf P_{\rma}^\H\mathbf P_{\rme}^\H\mathbf C_5\mathbf P_{\rme}\mathbf P_{\rma} )^\dagger \mathbf P_{\rma}^\H\mathbf P_{\rme}^\H\mathbf C_3 \mathbf C_4^{-1}.
\end{equation}

\noindent 
Let $\acute{\mathbf L}_1\triangleq \mathbf P_{\rme}^\H\mathbf C_3\mathbf P_\rmd^\H, \acute{\mathbf L}_2\triangleq \mathbf P_\rmd\mathbf C_4\mathbf P_\rmd^\H$, and $\acute{\mathbf L}_3\triangleq \mathbf P_{\rme}^\H\mathbf C_5\mathbf P_{\rme}$, the analog combiner is updated as
\begin{equation}
    \mathbf P_{\rma}^\star=\exp\left(\rmj \angle \left(\acute{\mathbf L}_1+\lambda_3 \mathbf P_{\rma}^{[n]}-\acute{\mathbf L}_3\mathbf P_{\rma}^{[n]}\acute{\mathbf L}_2\right)\right),
\end{equation}
with $\lambda_3=\widetilde{\omega}(\acute{\mathbf L}_2)\times\widetilde{\omega}(\acute{\mathbf L}_3)$. 

\noindent
Define
$\acute{\mathbf L}_4\triangleq \mathbf C_3\mathbf P_\rmd^\H\mathbf P_{\rma}^\H $
and $ \acute{\mathbf L}_5\triangleq \mathbf P_{\rma}\mathbf P_\rmd\mathbf C_4\mathbf P_\rmd^\H\mathbf P_{\rma}^\H$, the update of $\mathbf P_{\rme}$ follows exactly the same procedure as the DMA matrix update in Section~\ref{Sec:update DMA}.

\subsection{Overall Optimization Framework for Problem \eqref{P1} }

Combining the problem reformulation in Section \ref{Sec:Reformulation}, the FP method in Section \ref{sec:FP method}, and the BCD framework described in Section \ref{sec: BCD framework}, we summarize the complete optimization procedure for solving problem \eqref{P1} in Algorithm \ref{alg: FP-SGPI}. Starting from an initial nonzero feasible point $\{\mathbf W_{\rme}^{[0]},\mathbf W_{\rma}^{[0]},\mathbf W_\rmd^{[0]},\mathbf P_{\rme}^{[0]},\mathbf P_{\rma}^{[0]},\mathbf P_\rmd^{[0]}\}$, the auxiliary variables $\bm\eta^{[t]}$, $\bm\beta^{[t]}$ are first updated using \eqref{update auxilary variables}. Based on these variables, the matrices $\mathbf C_1^{[t]}$ and $\mathbf C_2^{[t]}$ are computed according to their definitions. Next, the DMA and analog beamforming matrices $\mathbf W_\rme^{[t]}$ and $\mathbf W_\rma^{[t]}$ are updated via the SGPI procedures described in Sections \ref{Sec:update DMA} and \ref{Sec:update analog}, respectively, followed by the closed-form update of the digital beamformer $\mathbf W_\rmd^{[t]}$ in \eqref{update W}. The update of the receive combining matrices mirrors that of the transmit side. Specifically, we compute $\mathbf C_3^{[t]},\mathbf C_4^{[t]}$, and $\mathbf C_5^{[t]}$, apply SGPI to update the DMA and analog combiners, and then update the digital combiner in closed form. These steps are iteratively executed until the objective value in \eqref{P1} converges. Finally, after convergence, the digital beamformer $\mathbf W_\rmd^{[t]}$ is rescaled to ensure satisfaction of the total transmit power constraint.

\begin{algorithm}[t]
	\textbf{Initialize}: $t\leftarrow0$, $\{\mathbf W_{\rme}^{[0]},\mathbf W_{\rma}^{[0]},\mathbf W_\rmd^{[0]},\mathbf P_{\rme}^{[0]},\mathbf P_{\rma}^{[0]},\mathbf P_\rmd^{[0]}\}$\;
	\Repeat{convergence of objective value in \eqref{P1}}{
		$t\leftarrow t+1$\;
		
        Update $\bm{\eta}^{[t]} $ and $ \bm{\beta}^{[t]}$ using \eqref{update auxilary variables}\;
        Compute $\mathbf{C}_1^{[t]}$ and $\mathbf{C}_2^{[t]}$ using \eqref{eq:C12}\;
   
        Update $\mathbf W_{\rme}^{[t]}$ using Algorithm \ref{alg: SGPI for DMA}\;
        Update $\mathbf{W}_\rma^{[t]}$ using Algorithm \ref{alg: SGPI}\;
        Update $\mathbf{W}_\rmd^{[t]}$ using the expression in \eqref{update W}\;
 Compute $\mathbf{C}_3^{[t]}, \mathbf C_4^{[t]}$ and $\mathbf{C}_5^{[t]}$ using \eqref{eq:C345}\;
   
        Update $\mathbf P_\rme^{[t]}$ using Algorithm \ref{alg: SGPI for DMA}\;
        Update $\mathbf{P}_\rma^{[t]}$ using Algorithm \ref{alg: SGPI}\;
        Update $\mathbf{P}_\rmd^{[t]}$ using the expression in \eqref{update P}\;
 	
	}	
     Obtain $\mathbf{W}_\rmd^{[t]}$ using \eqref{scale W}.
	\caption{Proposed FP-SGPI Algorithm for Solving Problem~\eqref{P1}}
	\label{alg: FP-SGPI}				
\end{algorithm}

\subsection{Convergence Analysis}
\label{sec:convergence}
The convergence of the proposed algorithm follows from standard properties of the SGPI and BCD frameworks. Specifically, for each subproblem (e.g., \eqref{P4F}), the SGPI update can be interpreted as a shifted power iteration applied to a complex matrix. With a properly chosen positive shift, each iteration yields a non-decreasing objective value and converges to a stationary point of the subproblem.

Based on this property, the overall FP-SGPI algorithm produces a sequence of objective values that is monotonically non-decreasing. Moreover, since each block is either updated in closed form or via SGPI, the algorithm falls into the standard BCD framework, which guarantees convergence to a stationary point of problem \eqref{P2} \cite{xu2013block}. By leveraging the equivalence in Proposition~\ref{pro:equivalence}, the obtained solution can be further scaled to yield a stationary point of the original problem \eqref{P1}.

\subsection{Computational Complexity Analysis}

The computational complexity of Algorithm~\ref{alg: FP-SGPI} is primarily attributed to matrix multiplications. In each iteration, the complexity of updating the auxiliary variables $\bm\eta^{[t]}, \bm\beta^{[t]}, \mathbf C_1^{[t]}$, and $\mathbf C_2^{[t]}$ is $\mathcal{O}(K^2\Nr + K\Nr^2 + K\Nw(\Nrf + \Nr))$. The update of the DMA beamforming matrix via the SGPI algorithm incurs a complexity of $\mathcal{O}(I_1\Nr^2)$, where $I_1$ is the number of SGPI iterations. Similarly, updating the analog beamforming matrix requires $\mathcal{O}(I_2(\Nrf^2\Nw + \Nrf\Nw^2))$ operations, with $I_2$ denoting the associated SGPI iteration count. The update of the digital beamforming matrix involves matrix inversion and multiplication operations, resulting in a complexity of $\mathcal{O}(\Nrf\Nw\Nr + K\Nrf\Nr + \Nrf\Nr^2 + \Nrf^3)$. 

Since the receiver adopts the same tri-HBF architecture, the updates of the DMA combiner, analog combiner, and digital combiner contribute a complexity of the same order as their transmit-side counterparts. Thus, the complexity of the complete tri-HBF transceiver scales as roughly twice that of the beamformer alone, without altering the dominant order terms. 

Combining these components, the overall complexity of Algorithm~\ref{alg: FP-SGPI} over $I_4$ outer iterations is $\mathcal{O}\big(I_4(K^2\Nr + K\Nr^2 + I_1\Nr^2 + I_2(\Nrf^2\Nw + \Nrf\Nw^2) + \Nrf\Nw\Nr + K\Nrf\Nr + \Nrf\Nr^2 + \Nrf^3)\big)$. Since $\Nr$ is typically much larger than the other parameters ($K$, $\Nrf$, and $\Nw$) in large-scale systems, the overall computational complexity can be approximated by $\mathcal{O}(\bar{I}N_{\mathrm{r}}^2)$, where $\bar{I} \triangleq I_4(K+I_1+N_{\mathrm{RF}})$. This indicates that the proposed method incurs a reasonable computational complexity for large-scale implementations.

\section{Simulation Results}\label{simulation}\label{Sec:simulation}

In this section, we evaluate the convergence behavior and performance of the proposed algorithms. Unless otherwise specified, the simulation parameters are set as follows: $\Nu=16$, $\Nw=8$, $\Nrf=4$, $K=4$, $C=2$ and $M=3$. The transmit power is fixed at $P_{\mathrm t}=10$ dBm, while the noise variances are $\sigma_{k}^2=0$ dBm, $\forall k$. We consider a DMA-based BS operating at a carrier frequency of $f_c=28$ GHz (corresponding to a wavelength of $\lambda_c=1.07$ cm). The inter-waveguide spacing is set to $\lambda_c/2$, while the spacing between adjacent radiating elements is $\lambda_c/5$~\cite{zhang2022beam}. The attenuation coefficient and wavenumber are set to $\vartheta=0.6~\text{m}^{-1}$ and $\varpi=827.67~\text{m}^{-1}$, respectively~\cite{chen2025energy}. Thus, the DMA is modeled as a uniform planar array (UPA) with $\Nw\times\Nu$ elements, where $\Nw$ waveguides and each waveguide contains $\Nu$ radiating elements arranged along the horizontal and vertical dimensions. The UPA steering vector $\mathbf a(\theta,\phi)$ is modeled as $\mathbf a(\theta,\phi)=\mathbf a_{\mathrm u}(\theta,\phi)\otimes \mathbf a_{\mathrm w}(\phi)$ \cite{nhan2022UPA}, where 
\begin{align*}
&\mathbf a_{\mathrm u}(\theta,\phi)={1}/{\sqrt{N_{\mathrm u}}}\left[1,e^{j\pi\sin{\theta}\sin{\phi}},\ldots,e^{j\pi(N_{\mathrm u}-1)\sin{\theta}\sin{\phi}}\right]^\T,\\
&\mathbf a_{\mathrm w}(\phi)={1}/{\sqrt{N_{\mathrm{w}}}}\left[ 1,e^{j\pi\cos{\phi}},\ldots,e^{j\pi(N_{\mathrm w}-1)\cos{\phi}}\right]^\T.    
\end{align*}
 The number of propagation paths is set to $L=10$. The azimuth angles $\theta_m$ are independently drawn from the uniform distribution $\mathcal U(-\pi/3,\pi/3)$, while the elevation angles $\phi_m$ are sampled from $\mathcal U(\pi/6,5\pi/6)$. The convergence tolerance for Algorithm~\ref{alg: FP-SGPI} is set to $10^{-4}$. All presented results are obtained by averaging over 100 independent channel realizations.

 The proposed algorithm is initialized with a feasible point. Specifically, the phases of the analog and DMA beamformers and combiners are independently drawn from the uniform distribution $\mathcal U[0,2\pi)$. The digital beamformer is initialized using regularized zero-forcing (RZF), while the digital combiner is obtained by maximizing each $\gamma_{\rm s m}$ with fixed analog and DMA combiners. The latter reduces to a generalized eigenvalue problem, which can be efficiently solved using standard methods.

\subsection{Power Consumption Model}

We adopt the BS power model
\begin{equation}
    P_{\text{tot}}^{x}=\rho_\text{a}^{-1} P_\text{t} + P_{\text{BS}} + P_{\text{dyn}}^{x},
\end{equation}
where $\rho_\text{a}$ is the PA efficiency, $P_{\text{BS}}$ is the static circuit power, and $P_{\text{dyn}}^{x}$ denotes the dynamic hardware power of architecture $x$.

In the DMA architecture, adjacent waveguides are spaced by half a wavelength, which is consistent with traditional arrays. 
However, the radiating elements along each waveguide can be placed with a smaller inter-element spacing, allowing more elements to be deployed within the same physical aperture. To enable fair comparisons with conventional antenna arrays, we consider two comparison strategies. \textbf{SA (same aperture)} refers to a half-wavelength spaced array whose physical aperture matches that of the DMA-based array. 
The corresponding number of antenna elements in vertical is
$N_{\mathrm u}^{\text{SA}} = \lfloor 2(\Nu-1)/5 \rfloor + 1$. \textbf{SN (same number)} uses a half-wavelength spaced array with the same number of antenna elements as the DMA-based array, i.e., $N_{\mathrm u}^{\text{SN}}=\Nu$, which results in a larger physical aperture.

The dynamic power consumption is summarized below.
\begin{itemize}
\item \textit{Tri-HBF (THB)}: $\Nrf$ RF chains and $\Nrf\Nw$ phase shifters,
\begin{equation}
    P_{\text{dyn}}^{\text{THB}}= \Nrf P_{\mathrm{RF}} + \Nrf \Nw P_{\mathrm{PS}}.
\end{equation}

\item \textit{DMA-HBF}: $\Nw$ waveguides, each driven by one RF chain (no phase shifters),
\begin{equation}
    P_{\text{dyn}}^{\text{DMA-HBF}}= \Nw P_{\mathrm{RF}}.
\end{equation}

\item \textit{Fully-connected (FC)-HBF}~\cite{yu2016alternating}: $\Nrf$ RF chains and $\Nrf\Nw\Nu$ phase shifters,
\begin{align}
    P_{\text{dyn}}^{\text{FC-HBF SA}}&= \Nrf P_{\mathrm{RF}} + \Nrf \Nw N_{\mathrm u}^{\text{SA}} P_{\mathrm{PS}},\\
    P_{\text{dyn}}^{\text{FC-HBF SN}}&= \Nrf P_{\mathrm{RF}} + \Nrf \Nw N_{\mathrm u} P_{\mathrm{PS}}.
\end{align}

\item \textit{Sub-connected (SC)-HBF}~\cite{yu2016alternating}: $\Nrf$ RF chains and $\Nw\Nu$ phase shifters,
\begin{align}
    P_{\text{dyn}}^{\text{SC-HBF SA}}&= \Nrf P_{\mathrm{RF}} + \Nw N_{\mathrm u}^{\text{SA}} P_{\mathrm{PS}},\\
    P_{\text{dyn}}^{\text{SC-HBF SN}}&= \Nrf P_{\mathrm{RF}} + \Nw N_{\mathrm u} P_{\mathrm{PS}}.
\end{align}

\item \textit{Fully-digital (FD)}: one RF chain per antenna,
\begin{align}
    P_{\text{dyn}}^{\text{FD SA}}&=  \Nw N_{\mathrm u}^{\text{SA}} P_{\mathrm{RF}},\\
    P_{\text{dyn}}^{\text{FD SN}}&=  \Nw N_{\mathrm u} P_{\mathrm{RF}}.
\end{align}
\end{itemize}

We set $\rho_{\text{a}}=0.3$, $P_{\text{BS}}=40$ dBm, $P_{\mathrm{RF}}=30$ dBm, and $P_{\mathrm{PS}}=30$ mW~\cite{you2023energy}.

\noindent The communications EE is defined as
\begin{equation}
\text{EE}_{\rmc}=\frac{\sum_{k=1}^K \log\left(1+\gamma_{{\rm c}k}\right)}{P_{\text{tot}}}.
\end{equation}
In contrast, sensing EE in ISAC systems does not yet have a unified definition. Prior works such as \cite{zou2024energy}
define sensing EE as the ratio between the inverse CRLB and the total power consumption. In this work, we define sensing EE as the sensing MI per unit power, i.e., 
 \begin{equation}
\text{EE}_\rms=\frac{\sum_{m=1}^M \log(1+\gamma_{\rms m})}{P_{\text{tot}}}.
\end{equation}
This definition mirrors the communications EE and provides a unified interpretation: both metrics quantify how much information is obtained per joule of consumed energy.

\begin{figure}[t!]
	\centering
	\subfigure[Convergence versus iteration index]{\label{fig:conv_iter}
	\includegraphics[width=0.8\linewidth]{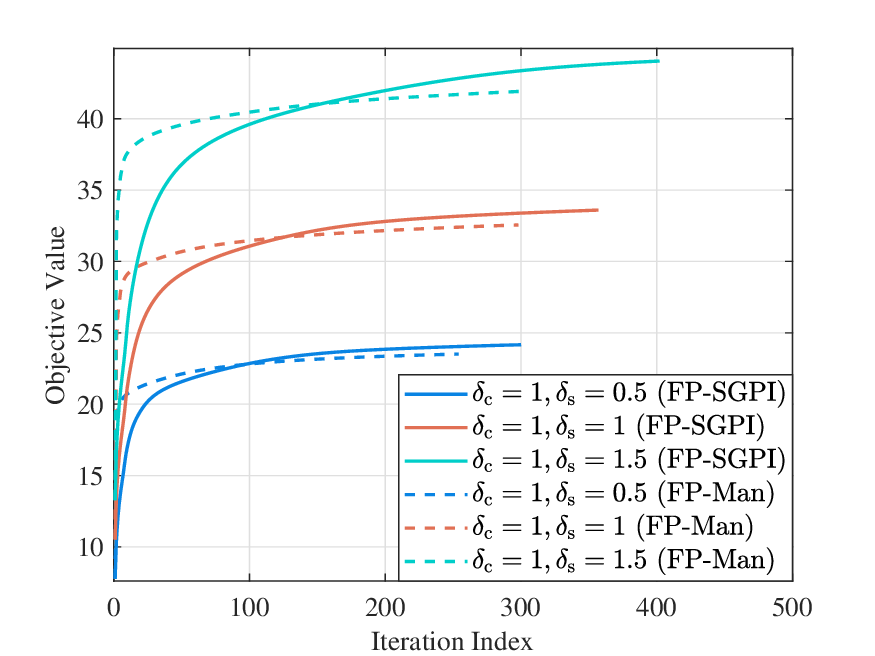}
}\\[1ex]
\subfigure[Convergence versus CPU time]{\label{fig:conv_time}
	\includegraphics[width=0.8\linewidth]{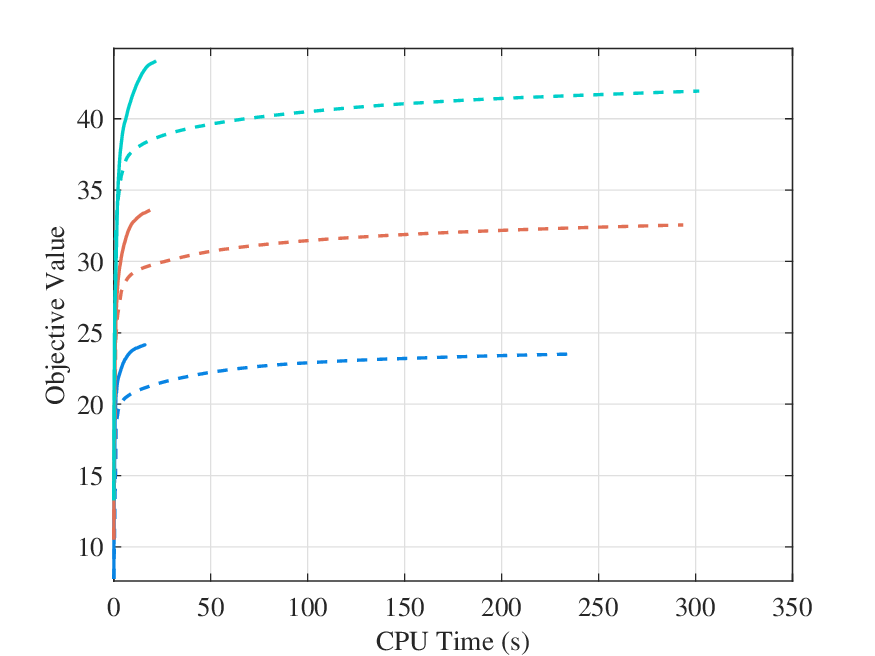}
}
	\caption{Convergence behavior of the FP-SGPI and FP-Man algorithms under different weight settings. 
(a) Objective value versus iteration index. 
(b) Objective value versus CPU time. }
	\label{fig:convergence}
\end{figure}

\begin{figure}[t!]
	\centering
	\subfigure[Communications sum rate vs. sensing sum MI.]{\label{fig2a:SR_vs_SMI}
	\includegraphics[width=0.8\linewidth]{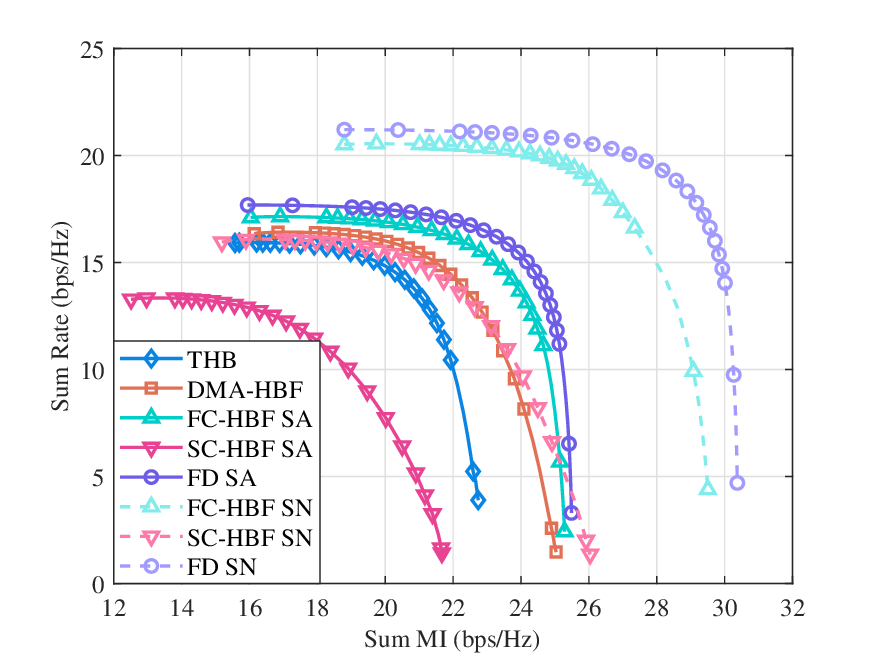}
}\\[1ex]
\subfigure[Communications EE vs. sensing EE.]{\label{fig2b:CEE_vs_SEE}
		\includegraphics[width=0.8\linewidth]{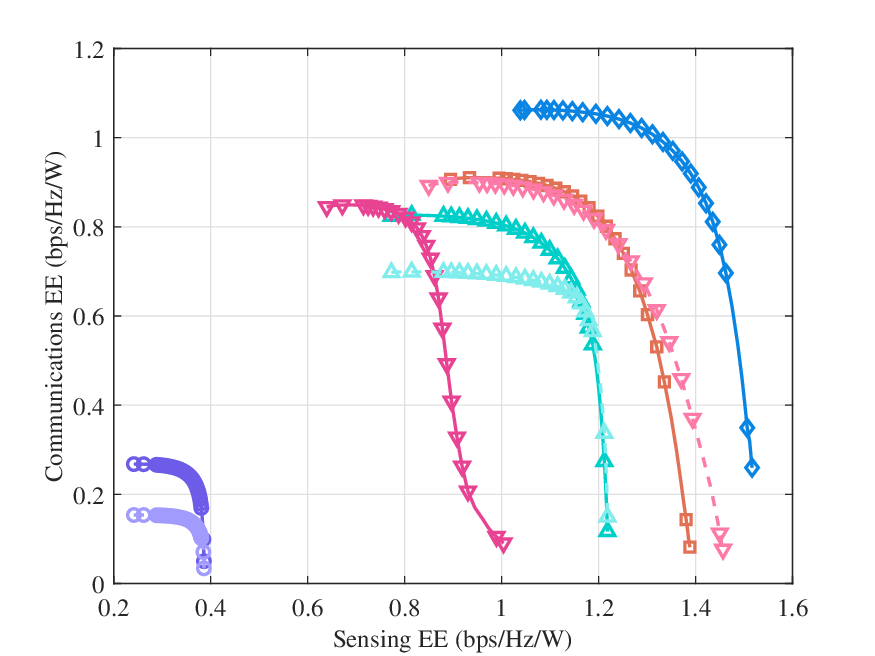}
}
	\caption{Tradeoff performance of different architectures: 
    (a) sum rate vs. sum MI, 
    (b) Communications EE vs. sensing EE.}
	\label{fig2:SR_EE_tradeoff}
\end{figure}

\subsection{Convergence}

Fig.~\ref{fig:convergence} illustrates the convergence behavior of the proposed FP-SGPI algorithm under different choices of weight coefficients, with FP-Man included as a benchmark for comparison. It is observed that, in all cases, the proposed algorithm exhibits a monotonic increase in the objective value and converges within a limited number of iterations, demonstrating its effectiveness and stability. Moreover, while FP-Man typically converges in fewer iterations, the proposed FP-SGPI algorithm achieves faster convergence in terms of CPU time due to its significantly lower per-iteration computational complexity.

The FP-Man benchmark is implemented by replacing the SGPI-based update in the proposed framework with a manifold optimization step. Specifically, each subproblem is solved on the corresponding manifold using the Manopt toolbox, where a conjugate gradient method with line search is adopted. As a result, each iteration of FP-Man incurs a relatively high computational cost. In contrast, the proposed FP-SGPI algorithm leverages a closed-form shift parameter to update the variables, thereby avoiding line search and significantly reducing the per-iteration complexity. 

Finally, the results also demonstrate the flexibility of the proposed algorithm in balancing sensing and communication performance by appropriately tuning the weight coefficients.
\subsection{Basic Performance Trade-off}
Fig.~\ref{fig2:SR_EE_tradeoff} illustrates the fundamental tradeoff between communications and sensing performance under various transceiver architectures. In Fig.~\ref{fig2a:SR_vs_SMI}, the communications sum rate is plotted against the sensing sum MI. As the sum MI increases, the achievable sum rate decreases because more transmit resources are devoted to sensing. The THB architecture achieves a larger tradeoff region than the SC-HBF SA scheme, but a smaller region compared with the fully digital, FC-HBF, and DMA-HBF baselines. Notably, the performance gap on the sensing side is more pronounced than on the communications side, this is because both the transmitter and receiver use the THB structure for sensing, which amplifies the cumulative performance loss. The reduced performance of THB primarily stems from its physical constraints, including waveguide attenuation, DMA beamforming limitations, and the imposed subarray structure. However, THB achieves superior EE. This advantage is highlighted in Fig.~\ref{fig2b:CEE_vs_SEE}, which shows the communications EE versus the sensing EE. It is seen that, the THB architecture exhibits the largest EE tradeoff region among all considered designs, outperforming both conventional antenna-array-based HBF and the DMA-HBF baseline. By incorporating a phase-shifter network to reduce the number of required RF chains in the DMA-HBF architecture, the tri-HBF scheme achieves further gains in overall EE.

\subsection{Impact of Transmit Power}\label{Sec:transmit power}

\begin{figure}[t!]
    \centering
    \hspace{-0.7 cm}
      \subfigure[Communications sum rate]{
        \label{fig3a:SR_vs_TP}
        \includegraphics[width=0.52\linewidth]{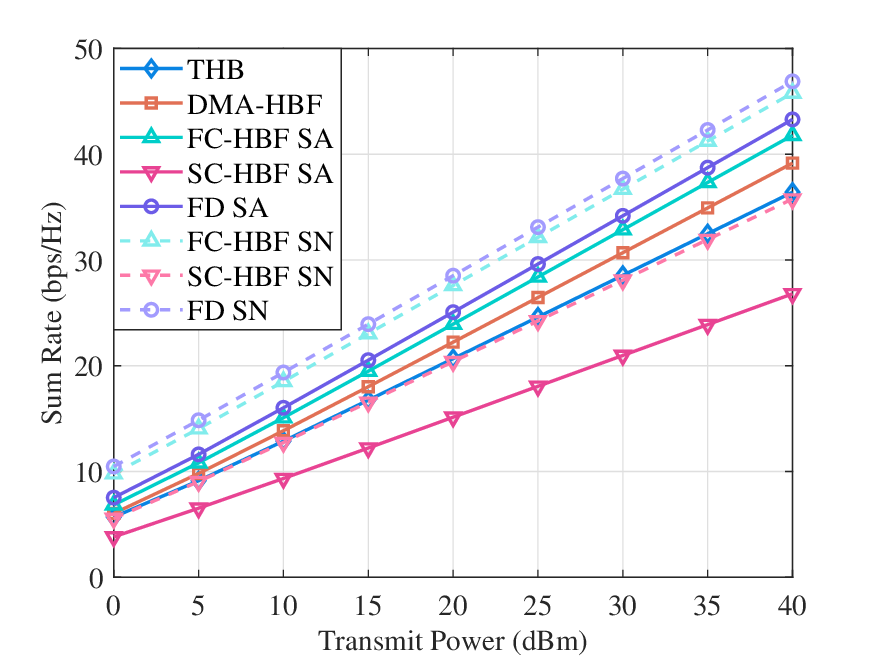}
    }\hspace{-0.7 cm}
    \subfigure[Sensing sum MI]{
        \label{fig3b:SMI_vs_TP}
        \includegraphics[width=0.52\linewidth]{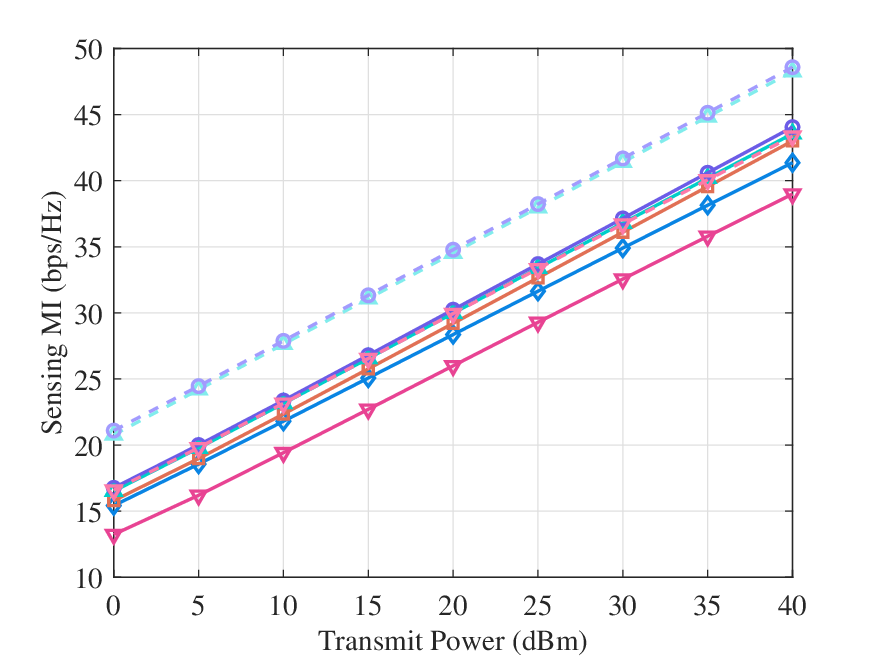}
    }

   \hspace{-0.7 cm} \subfigure[Communications EE]{
        \label{fig3c:CEE_vs_TP}
        \includegraphics[width=0.52\linewidth]{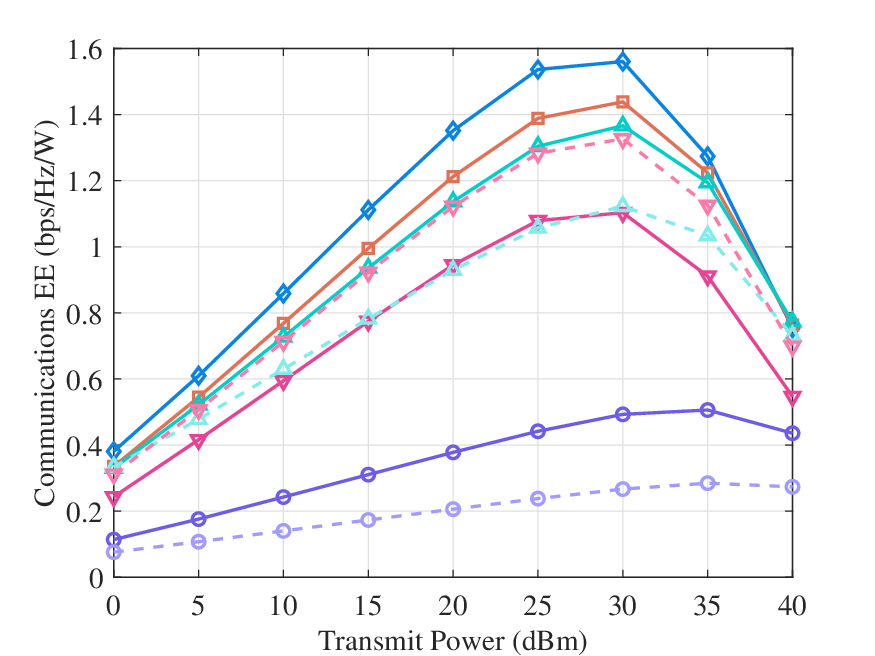}
    }\hspace{-0.7 cm}
    \subfigure[Sensing EE]{
        \label{fig3d:SEE_vs_TP}
        \includegraphics[width=0.52\linewidth]{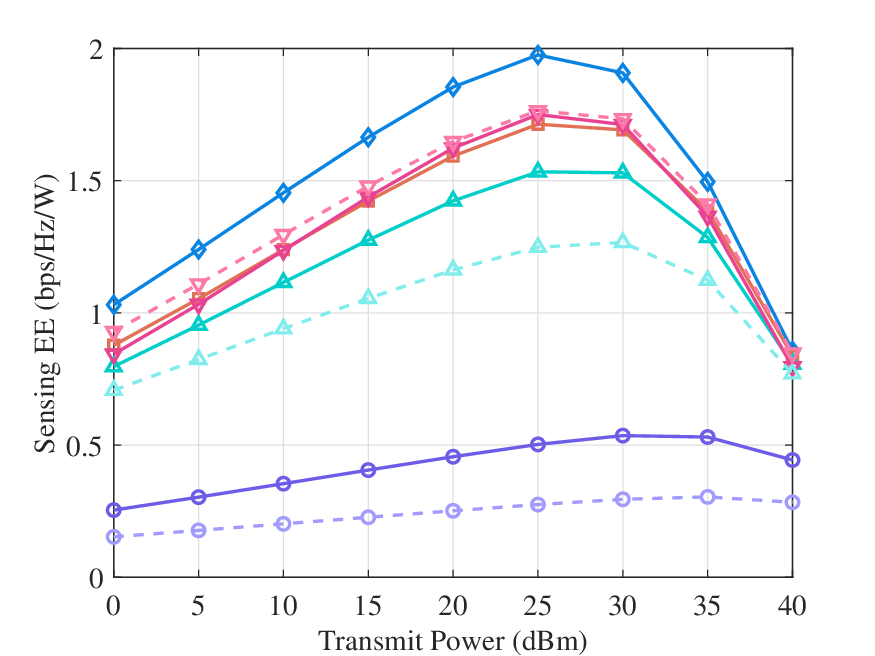}
    }
    \caption{Performance of different transceiver architectures under varying transmit power:
    (a) sum rate, 
    (b) sum MI,
    (c) communications EE, and 
    (d) sensing EE.}
    \label{fig3:performance_vs_TP}
\end{figure}

Fig.~\ref{fig3:performance_vs_TP} illustrates the performance of various transceiver architectures as the transmit power increases from 0 dBm to 40 dBm, with the weight coefficients set to $\delta_\rmc=\delta_\rms=1$. As expected, both the sum rate and the sum MI increase monotonically with transmit power. In Fig.~\ref{fig3a:SR_vs_TP}, the THB architectures exhibit communications performance comparable to the SC-HBF SN scheme, outperforming the SC-HBF SA but remaining inferior to the FD SA/SN and FC-HBF SA/SN counterparts. A similar ordering is observed for the sum MI in Fig.~\ref{fig3b:SMI_vs_TP}, except that the SC-HBF SN achieves slightly higher sum MI than the THB architectures. These observations are consistent with the trends reported in Fig.~\ref{fig2:SR_EE_tradeoff}. However, the EE in Fig.~\ref{fig3c:CEE_vs_TP} and Fig.~\ref{fig3d:SEE_vs_TP} show that the proposed THB architecture attains the highest EE, primarily due to its substantially lower dynamic power consumption.

\subsection{Impact of the Number of Elements per Waveguide}

\begin{figure}[t!]
    \centering
    \hspace{-0.7 cm}
      \subfigure[Communications sum rate]{
        \label{fig4a:SR_vs_Nu}
        \includegraphics[width=0.52\linewidth]{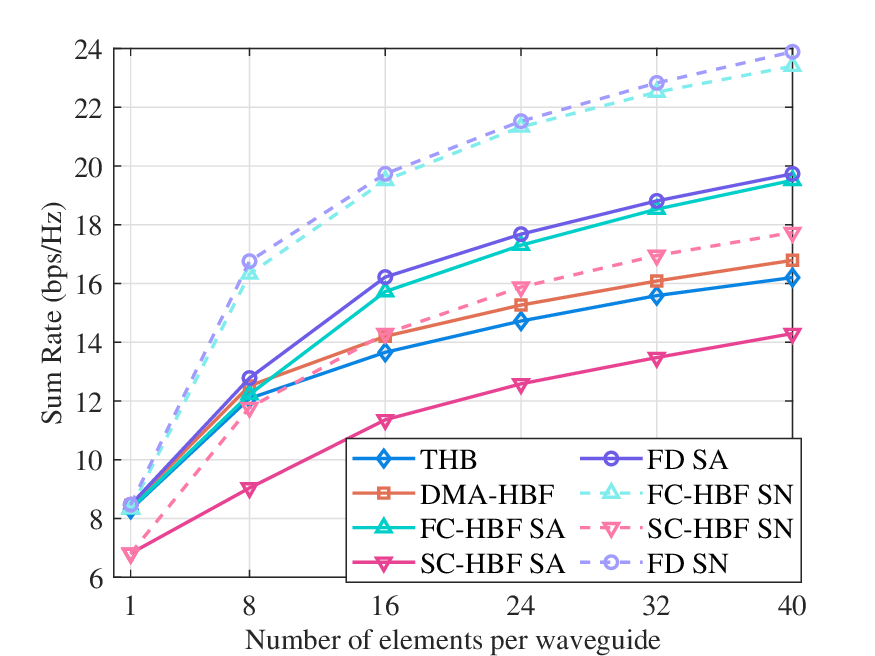}
    }\hspace{-0.7 cm}
    \subfigure[Sensing sum MI]{
        \label{fig4b:SMI_vs_Nu}
        \includegraphics[width=0.52\linewidth]{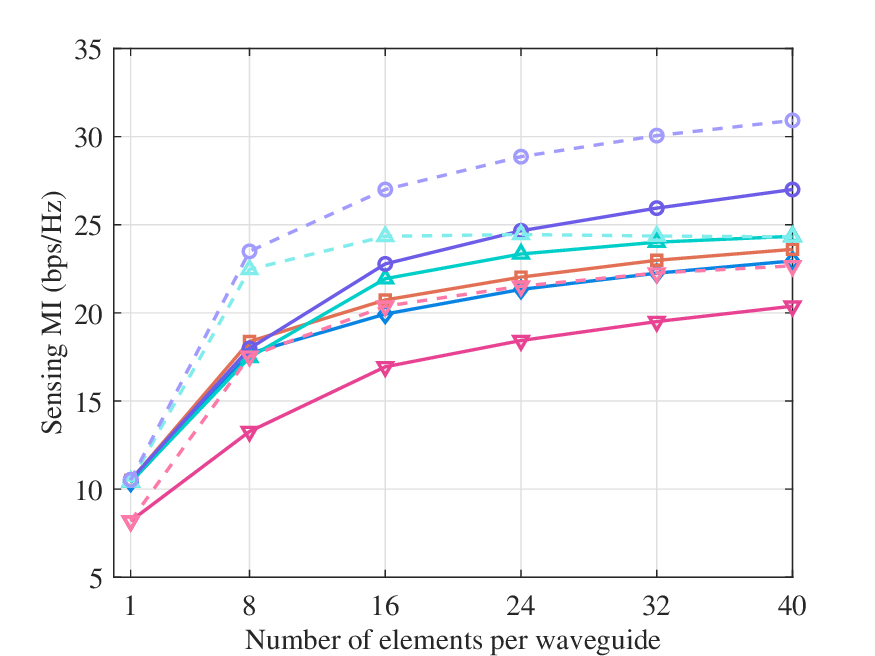}
    }

   \hspace{-0.7 cm} \subfigure[Communications EE]{
        \label{fig4c:CEE_vs_Nu}
        \includegraphics[width=0.52\linewidth]{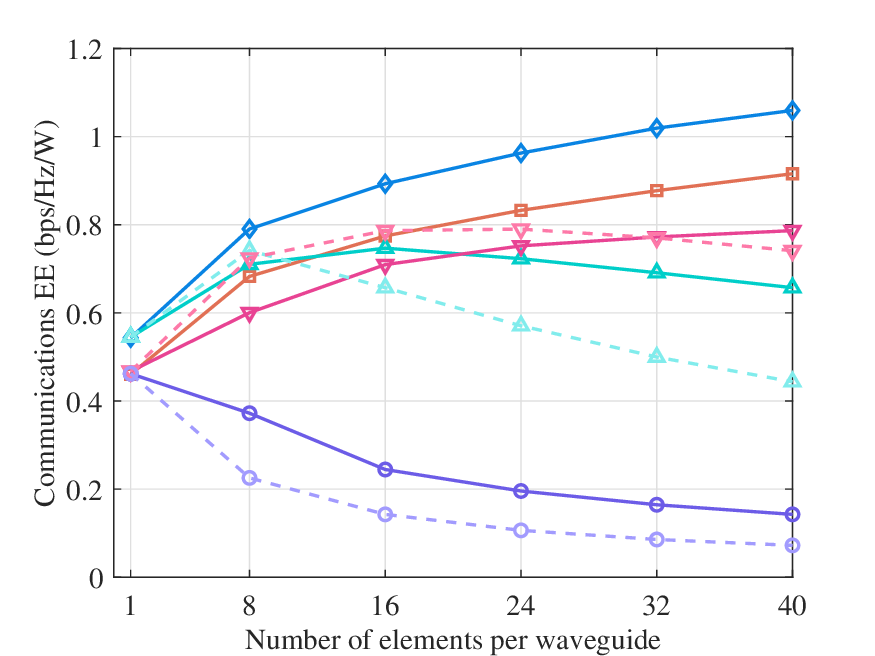}
    }\hspace{-0.7 cm}
    \subfigure[Sensing EE]{
        \label{fig4d:SEE_vs_Nu}
        \includegraphics[width=0.52\linewidth]{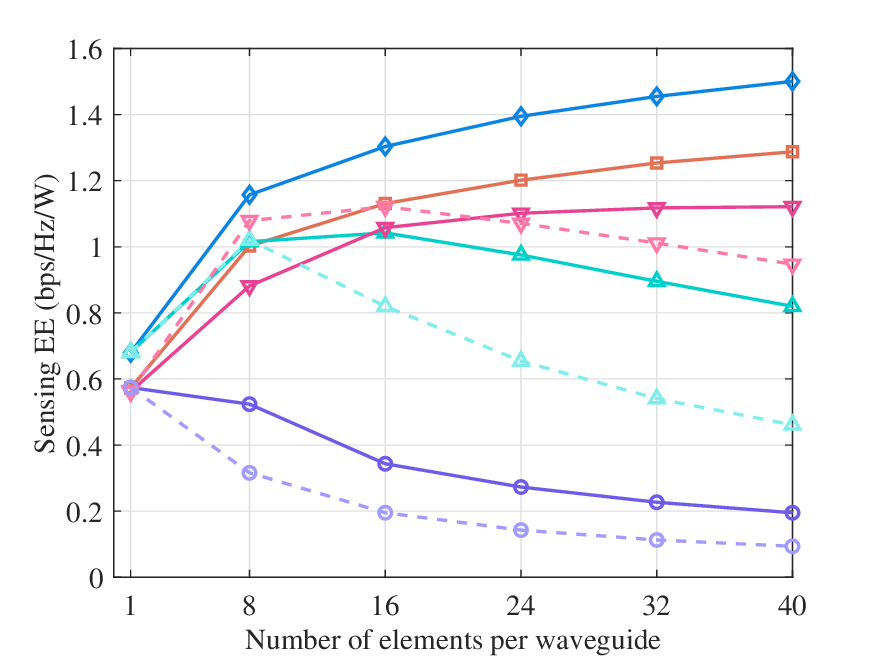}
    }
    \caption{Performance of different transceiver architectures under varying number of radiating elements per waveguide:
    (a) sum rate, 
    (b) sum MI,
    (c) communications EE, and 
    (d) sensing EE.}
    \label{fig4:performance_vs_Nu}
\end{figure}

Fig.~\ref{fig4:performance_vs_Nu} illustrates the performance of different transceiver architectures as the number of radiating elements per waveguide with $\Nu\in[1,8,16,24,32,40]$. The weight coefficients are set as $\delta_\rmc = \delta_\rms = 1$, and $P_\rmt = 10$ dBm. As shown in Fig.~\ref{fig4a:SR_vs_Nu}, the sum rate increases with $\Nu$ for all architectures. The DMA-HBF architecture achieves slightly higher performance than the proposed THB scheme, and both inferior to the FD SA/SN, FC-HBF SA/SN, and SC-HBF SN architectures in terms of achievable sum rate. Notably, the sum rate of SC-HBF SN is initially lower than that of THB but surpasses it as $\Nu$ increases, whereas the SC-HBF SA configuration remains consistently inferior to THB across the entire range of $\Nu$. Fig.~\ref{fig4b:SMI_vs_Nu} exhibits a similar trend to Fig.~\ref{fig4a:SR_vs_Nu}, except that the SC-HBF SN architecture achieves a sum MI comparable to that of the THB scheme. Fig.~\ref{fig4c:CEE_vs_Nu} and Fig.~\ref{fig4d:SEE_vs_Nu} demonstrate that the THB scheme consistently achieves the highest communications and sensing EE among all considered schemes. This superiority stems from its ability to strike a favorable balance between performance enhancement and hardware efficiency. In particular, as $\Nu$ increases, both the sum rate and sum MI grow substantially while the additional DMA radiating elements do not introduce extra hardware power consumption, thereby improving the numerator of the EE metric without affecting the denominator.
Since adding radiating elements to the waveguide does not incur additional power consumption, both the THB and DMA-HBF architectures exhibit increasing EE as $\Nu$ grows. In contrast, the FD baselines show decreasing EE, while the other HBF baselines first improve and then decline as $\Nu$ increases.
\subsection{Impact of the Number of Communications Users}

\begin{figure*}[t!]
	\centering
	\subfigure[$K=2$.]{\label{fig:SR_vs_SP_K1M3}
	\includegraphics[width=0.32\linewidth]{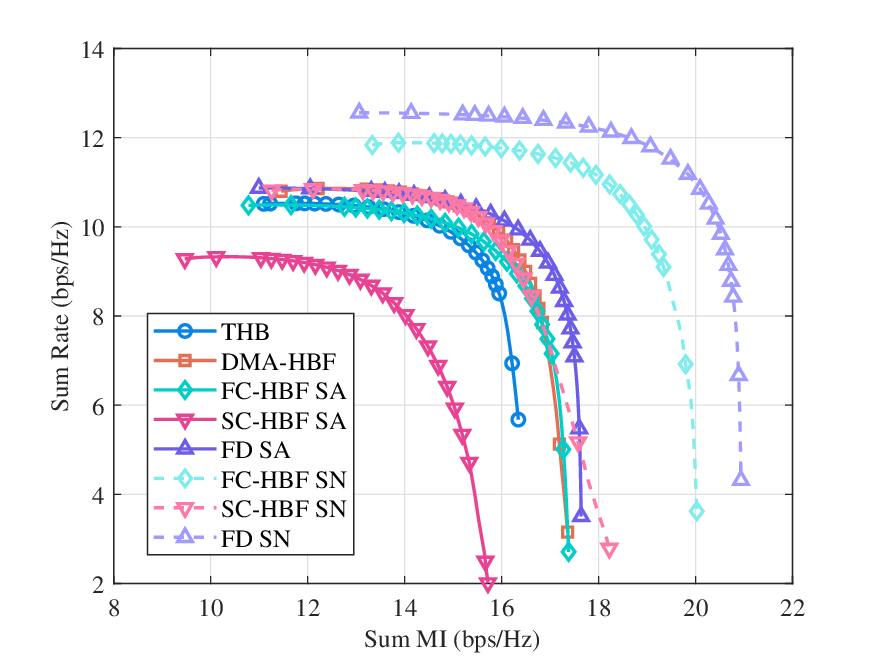}
}\hspace{-0.5 cm}
\subfigure[$K=3$.]{\label{fig:SR_vs_SP_K2M3 }
		\includegraphics[width=0.32\linewidth]{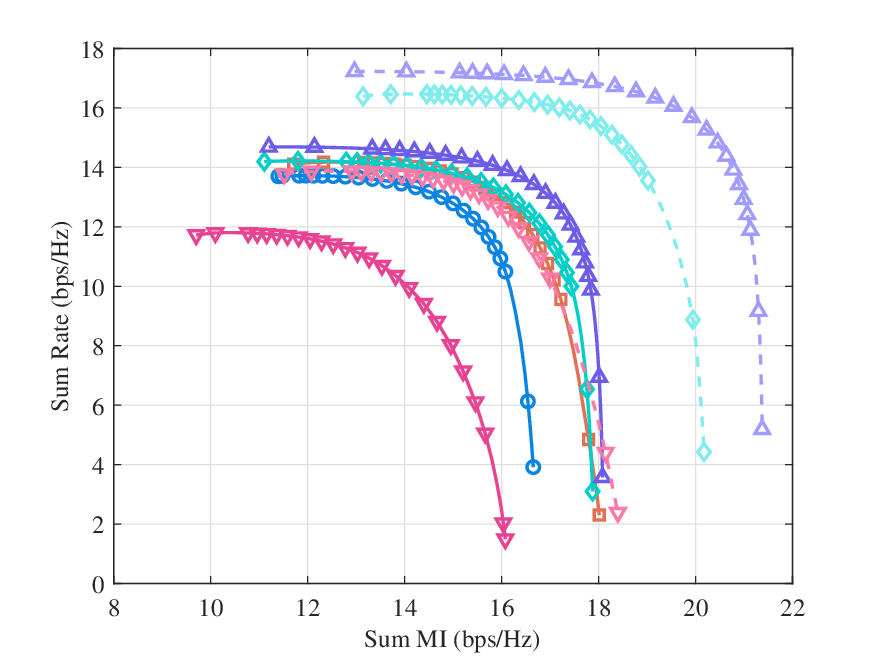}
}\hspace{-0.5 cm}
\subfigure[$K=4$.]{\label{fig:SR_vs_SP_K3M3}
		\includegraphics[width=0.32\linewidth]{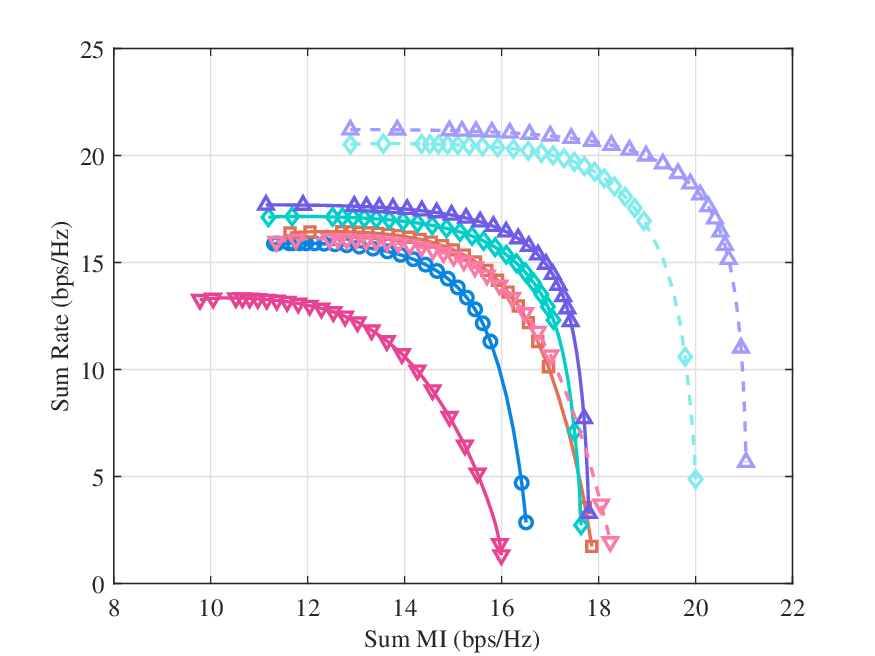}
}
	\caption{Comparison of different transceiver architectures with varying numbers of communications users ($K=1,2,3$) while fixing $M=2$. The tradeoff between sum rate and sum MI is illustrated for (a) $K=2$, (b) $K=3$, and (c) $K=4$.}

	\label{fig:performance_vs_K}
\end{figure*}

Fig.~\ref{fig:performance_vs_K} illustrates the performance of different transceiver architectures as the number of communications users increases from 2 to 4, with the number of sensing targets fixed at $M=2$. The results show that increasing the number of users enlarges the communications-only region, while the sensing-only region remains nearly unaffected. Interestingly, the communications-only region of the DMA-HBF system is larger than that of the FC-HBF SA architectures when $K=2$, becomes comparable when $K=3$, and falls below when $K=4$. This behavior can be explained by the fact that DMA-based architectures can accommodate more radiating elements within the same physical aperture, thereby achieving higher spatial gain than conventional antenna array-based systems. For a small number of users, this additional spatial gain translates into higher sum rates. As the number of users increases, however, multiuser interference becomes more significant, and the limited beamforming flexibility of DMA-based systems leads to performance degradation.

\subsection{Impact of the Number of Sensing Targets}

\begin{figure*}[t!]
	\centering
	\subfigure[M=1.]{\label{fig5a:SR_vs_SMI_K4M1}
	\includegraphics[width=0.32\linewidth]{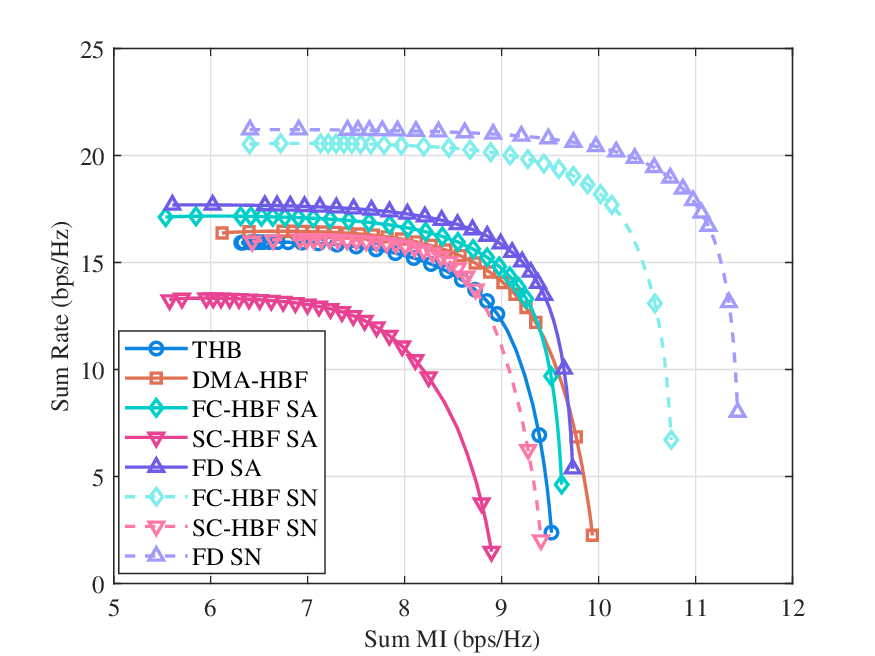}
}\hspace{-0.5 cm}
\subfigure[M=2.]{\label{fig5b:SR_vs_SMI_K4M2 }
		\includegraphics[width=0.32\linewidth]{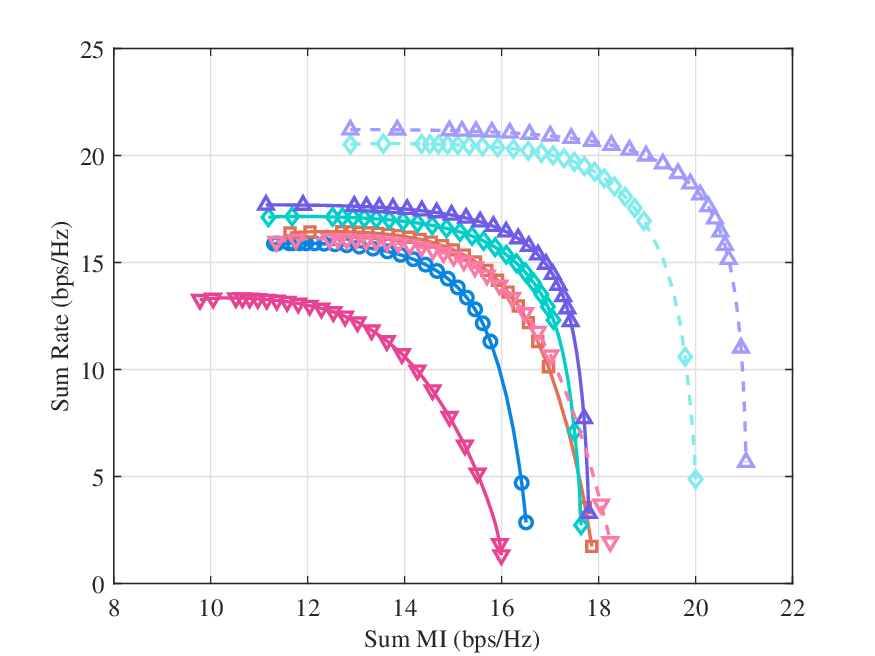}
}\hspace{-0.5 cm}
\subfigure[M=3.]{\label{fig5c:SR_vs_SMI_K4M3}
		\includegraphics[width=0.32\linewidth]{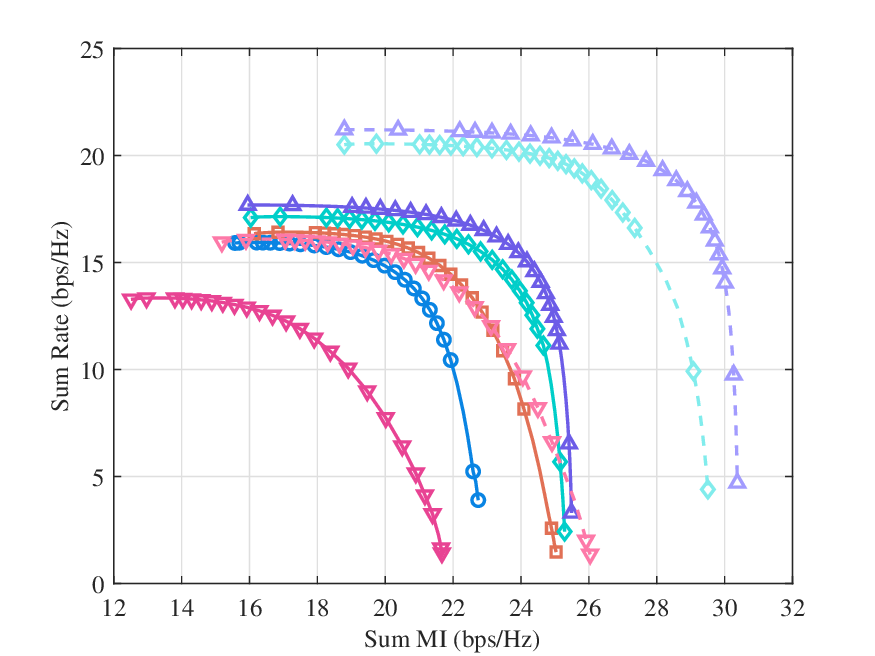}
}
	\caption{Comparison of different transceiver architectures with varying numbers of sensing targets ($M=1,2,3$) while fixing $K=4$. The tradeoff between sum rate and sum MI is illustrated for (a) $M=1$, (b) $M=2$, and (c) $M=3$.}

	\label{fig:performance_vs_M}
\end{figure*}
Fig.~\ref{fig:performance_vs_M} illustrates the performance of different transceiver architectures as the number of sensing targets increases from 1 to 3, with the number of communications users fixed at $K=4$. The results show that increasing the number of targets enlarges the sensing-only region, while the communications-only region remains nearly unchanged. We observe that the proposed THB architecture achieves a larger tradeoff region than the SC-HBF SA scheme, but a smaller one compared to the other baseline architectures. Moreover, the performance gap in the sensing region between the THB scheme and the other baselines widens as $M$ increases, which can be attributed to the aforementioned physical constraints of the THB architecture.

\section{Conclusion}\label{Sec:concu}
We have investigated a tri-HBF architecture for monostatic large-scale MIMO ISAC systems, where digital beamforming, analog phase shifters, and DMA-enabled electromagnetic processing are jointly optimized. By formulating a weighted sum maximization of communications sum rate and sensing MI under practical hardware constraints, we developed a low-complexity FP-based BCD algorithm with closed-form updates to efficiently tackle the resulting non-convex problem. Simulation results demonstrated that the proposed tri-HBF architecture achieves superior EE compared with fully digital, conventional hybrid, and DMA-based designs across the entire communications–sensing tradeoff regime. Although slight performance losses in sum rate and sensing MI arise due to inherent DMA hardware constraints, the overall results confirm the effectiveness of the tri-HBF architecture as a promising and energy-efficient solution for large-scale MIMO ISAC systems.

\appendix
\section{Appendix}
\numberwithin{lemma}{subsection} 
\numberwithin{corollary}{subsection} 
\numberwithin{remark}{subsection} 
\numberwithin{equation}{subsection}	

\subsection{Proof of Proposition \ref{pro:full power}}\label{property:full power}
  We prove this proposition by contradiction. For clarity, we define the total received signal power at user $k$ as $T_k\triangleq S_k+I_k+\sigma_{k}^2$, where $S_k\triangleq |\mathbf h_k^\H\mathbf W_\rme\mathbf W_\rma\mathbf w_{\rmd k}|^2, I_k\triangleq \sum_{j=1,j\neq k}^K |\mathbf h_k^\H\mathbf W_\rme\mathbf W_\rma\mathbf w_{\rmd j}|^2$. Suppose a locally optimal solution $\{\mathbf W_{\rme}^\diamond,\mathbf W_{\rma}^\diamond,\mathbf W_{\rmd}^\diamond\}$ exists such that the total transmit power satisfies $\|\mathbf W_{\rme}^\diamond\mathbf W_{\rma}^\diamond\mathbf W_\rmd^\diamond\|_\rmf^2<P_\rmt$.
Let $\varpi\in\mathbb R$ be a constant such that $ \sqrt {P_{\mathrm t}}/\|\mathbf W_{\rme}^\diamond\mathbf W_{\rma}^\diamond\mathbf W_\rmd^\diamond\|_\rmf\geq\varpi>1$, and define a scaled digital beamforming matrix $\mathbf W^\star\triangleq \varpi \mathbf W^\diamond$. The SINR for user $k$ then satisfies:
    \begin{align*}
    \frac{S_{k}(\mathbf W_{\rme}^\diamond,\mathbf W_{\rma}^\diamond,\mathbf W_{\rmd}^\diamond)}{I_k(\mathbf W_{\rme}^\diamond,\mathbf W_{\rma}^\diamond,\mathbf W_{\rmd}^\diamond)+\sigma_k^2}&< \frac{\varpi^2 S_k(\mathbf W_{\rme}^\diamond,\mathbf W_{\rma}^\diamond,\mathbf W_\rmd^\diamond)}{\varpi^2 I_k(\mathbf W_{\rme}^\diamond,\mathbf W_{\rma}^\diamond,\mathbf W_\rmd^\diamond)+\sigma_k^2}\\
    &=\frac{ S_k(\mathbf W_{\rme}^\diamond,\mathbf W_{\rma}^\diamond,\mathbf W_\rmd^\star)}{I_k(\mathbf W_{\rme}^\diamond,\mathbf W_{\rma}^\diamond,\mathbf W_\rmd^\star)+\sigma_k^2}.
\end{align*}
    The scaled $\mathbf W_\rmd^\star$ yields a strictly higher communications sum rate than $\mathbf W_\rmd^\diamond$, as it increases monotonically with SINR. For sensing MI, SCNR shares the same structure with SINR, indicating an improved sensing performance at point $\mathbf W_\rmd^\star$. The detailed derivations are omitted here for brevity. Overall, $\mathbf W_\rmd^\star$ achieves a strictly higher overall objective than $\mathbf W_\rmd^\diamond$, contradicting the local optimality of $\mathbf W_\rmd^\diamond$. We conclude that any locally optimal point, the power constraint must be active, i.e., satisfied with equality.

\subsection{Proof of Proposition \ref{pro:equivalence}}\label{App:equivalence}
We establish the equivalence between problem \eqref{P1} and its reformulation \eqref{P2} by leveraging the scale-invariance property of the objective function in \eqref{P2}, akin to the equivalence between Rayleigh quotient maximization and eigenvalue problems.

Let usa denote the objective functions of problem \eqref{P2} and \eqref{P1full} as $f_{1}(\bm\Upsilon,\mathbf W_\rmd)$ and $f_2(\bm\Upsilon,\mathbf W_\rmd)$, where $\bm\Upsilon\triangleq \{\mathbf W_{\rme},\mathbf P_{\rme},\mathbf W_{\rma},\mathbf P_{\rma},\mathbf P_\rmd\}$. Note that $f_1(\bm\Upsilon,\mathbf W)$ is scale-invariant, i.e., 
\begin{equation}
    f_1(\bm\Upsilon,\mathbf W_\rmd)=f_1(\bm\Upsilon,\varpi \mathbf W_\rmd), \quad \forall \varpi \in \mathbb{R} \setminus {0}.
\end{equation}
Using this property, we can write
\begin{align}
   \label{P2app} &\max_{\mathbf W_\rmd,\bm\Upsilon} f_1(\bm\Upsilon,\mathbf W_\rmd) \quad \text{s.t.} \eqref{P1C1},\eqref{P1C3}\\
   \label{P2app2}  &\overset{(a)}{\equiv} \max_{\mathbf W_\rmd,\bm\Upsilon} f_1(\bm\Upsilon,\mathbf W_\rmd) \quad \text{s.t.} \eqref{P1fullC1},\eqref{P1C1},\eqref{P1C3}\\
   \label{P2app3} &\overset{(b)}{\equiv} \max_{\mathbf W_\rmd,\bm\Upsilon} f_2(\bm\Upsilon,\mathbf W_\rmd) \quad \text{s.t.} \eqref{P1fullC1},\eqref{P1C1},\eqref{P1C3},
\end{align}
where step $(a)$ holds because we can always scale the beamforming matrix $\mathbf W_\rmd$ to satisfy the power constraint in \eqref{P1fullC1}. Specifically, define a dummy variable $\widetilde{\mathbf W}_\rmd=\sqrt{\frac{P_\rmt}{\|\mathbf W_{\rme}\mathbf W_{\rma}\mathbf W_\rmd\|_\rmf}}\mathbf W_\rmd$, and substitute $\widetilde{\mathbf W}_\rmd$ into problem \eqref{P2}. The scale invariance ensures the objective remains unchanged, while the add power constraint \eqref{P1fullC1} becomes valid. Step $(b)$ follows by explicitly enforcing the power constraint, thereby aligning the objective with that of problem \eqref{P1full}.

This establishes the equivalence between problems \eqref{P2} and \eqref{P1full}. Since problem \eqref{P1full} has already been shown to be equivalent to the original problem \eqref{P1} in proposition \ref{pro:full power}, the proof is complete.

\bibliographystyle{IEEEbib} 
\bibliography{IEEEabrv,reference}
%
%

\end{document}